\title{Explicit Construction of $q$-ary $2$-deletion Correcting Codes with Low Redundancy}
\author{Shu Liu\thanks{Shu Liu is with the National Key Laboratory of Science and Technology on Communications, University of Electronic Science and Technology of China, Chengdu 611731, China (email: shuliu@uestc.edu.cn).},
Ivan Tjuawinata\thanks{Ivan Tjuawinata is with the Strategic Centre for Research on Privacy-Preserving Technologies and Systems, Nanyang Technological University, Singapore 637553 (email: ivan.tjuawinata@ntu.edu.sg).},
Chaoping Xing\thanks{Chaoping Xing is with the School of Electronic Information and Electric Engineering, Shanghai Jiao Tong University, Shanghai, China (email: xingcp@sjtu.edu.cn).}}
\date{}
\newtheorem{lemma}{Lemma}[section]
\newtheorem{theorem}[lemma]{Theorem}
\newtheorem{cor}[lemma]{Corollary}
\newtheorem{claim}[lemma]{Claim}
\newtheorem{rem}[lemma]{Remark}
\newtheorem{defn}{Definition}
\theoremstyle{remark}
\renewcommand{\epsilon}{\varepsilon}
\renewcommand{\le}{\leqslant}
\renewcommand{\ge}{\geqslant}
\newcommand{\vnote}[1]{}
\def \balpha {{\boldsymbol\alpha}}
\def \Xi {{X^{[i]}}}
  \newcommand{\future}[1]{\todo[color=blue!25,inline]{\textsf{TODO Future:} #1}}
  \newcommand{\future}[2][]{}
  \newcommand{\xcp}[2][]{\todo[color=yellow!50,#1]{\textsf{XCP:} #2}}
  \newcommand{\xcp}[2][]{}
  \newcommand{\ivan}[2][]{\todo[color=blue!50,#1]{\textsf{ivan:} #2}}
  \newcommand{\ivan}[2][]{}
\begin{document}
\onecolumn
\maketitle

\begin{abstract} We consider the problem of  efficient construction of $q$-ary $2$-deletion correcting codes with low redundancy. We show that our construction requires less redundancy than any existing efficiently encodable  $q$-ary $2$-deletion correcting codes. Precisely speaking, we present an explicit construction of a $q$-ary $2$-deletion correcting code with  redundancy $5\log n+10\log\log n + 3\log q+O(1).$
Using a minor modification to the original construction, we obtain an efficiently encodable $q$-ary $2$-deletion code that is efficiently list-decodable. Similarly, we show that our construction of list-decodable code requires a smaller redundancy compared to any existing list-decodable codes.

To obtain our sketches, we  transform a $q$-ary codeword to a binary string which can then be used as an input to the underlying base binary sketch. This is then complemented with additional $q$-ary sketches that the original $q$-ary codeword is required to satisfy. In other words, we build our codes via a binary $2$-deletion code as a black-box. Finally we utilize the binary $2$-deletion code proposed by Guruswami and H{\aa}stad to our construction to obtain the main result of this paper.
\end{abstract}
\begin{IEEEkeywords}
Insertion and Deletion, Efficient Construction, Error Correction
\end{IEEEkeywords}
\section{Introduction}\label{sec:intro}

We consider the construction of a $q$-ary $2$-deletion correcting code. A $q$-ary $2$-deletion correcting code is a set of  strings of a fixed length $n$ over an alphabet $\Sigma_q$ of size $q, \mathcal{C}\subseteq \Sigma_q^n$ such that for any two distinct codewords $\mathbf{c}_1,\mathbf{c}_2\in \mathcal{C},$ they do not have any common subsequence of length at least $n-2.$ A natural question is whether we can efficiently construct such $q$-ary $2$-deletion correcting code with as large size as possible. Equivalently, this can be formulated as the problem of designing efficient construction of $q$-ary $2$-deletion correcting code with minimum amount of redundancy. Here we define the redundancy of the code to be the additional bits required during the encoding to facilitate the error correction. More specifically, we define the redundancy of a $q$-ary code $\mathcal{C}\subset \Sigma_q^n$ of size $M$ to be $n\log_2 q - \log_2 M.$  Denote by $R_D(q,n,2)$ (and $R_S(q,n,2)$, respectively) to be the smallest redundancy of  $q$-ary $2$-deletion (and substitution, respectively) error  correcting  codes of length $n$.  Throughout this work, for simplicity, the notation $\log(\cdot)$ is used to denote the logarithm base $2.$ It was shown in \cite{Lev02} that
\[2\log n + 2\log q + o(\log q\log n)\le R_D(q,n,2)\le 4\log n + 2\log q + o(\log q\log n).\]

It is then interesting to see if an efficient construction of $q$-ary $2$-deletion correcting code with redundancy approaching the optimal redundancy provided above can be designed.

\subsection{Related Work}\label{sec:intro-RW}
There have been numerous studies on low redundancy Hamming metric codes with fixed Hamming distance (see \cite{YD04} for instance). In particular, we have $R_S(q,n,2)=2\log n+o(\log n)$ for $q=2,3,4$ and $2\log n+o(\log n)\le R_S(q,n,2)\le \frac73\log n+o(\log n)$ for $q>4$.

There have been various works on the construction of $t$-deletion codes. The study of $t$-deletion code construction was first considered in the construction of the binary Varshamov-Tenengolts codes (VT codes for short) \cite{VT65,Lev65} which was extended to $q$-ary codes in \cite{Ten84}. VT codes, which can correct one deletion error, was then further extended to the binary Helberg code \cite{HF02}, which can correct multiple deletions. Helberg code was further extended to be defined over alphabets of size $q$ in \cite{LN16}. All these constructions are based on the concept of sketch where codes are constructed by collecting all solutions to a pre-defined system of equations. Asymptotically, the redundancy of the resulting code equals the size of the corresponding sketch.

Although Hamming metric codes with logarithmic redundancy have been studied for many years, study on deletion correcting codes with logarithmic redundancy was initiated in \cite{BGZ18} recently, which proposed binary codes capable of correcting multiple deletions. More specifically, they construct binary $k$-deletion codes with redundancy $O(k^2\log k \log n).$ There have been some improvements on this construction, especially in the case when $k=2.$ In \cite{SRB18}, a binary $2$-deletion codes with sketch of size $7\log n+6$ bits was proposed while a binary $8\log n+ O(\log \log n)$ was proposed in \cite{GS19}. A more general improvement on binary $k$-deletion codes was proposed with redundancy $8k\log n + o(\log N)$ in \cite{SB19} as long as $k=o(\sqrt{\log \log n}).$ A binary $2$-deletion codes was more recently constructed with the help of both sketch and regularity assumption in \cite{GH21} to produce a binary code of redundancy $4\log n + O(\log \log n).$ In contrast to the previous constructions, the work of \cite{GH21} provides a smaller sketch size by requiring that the codewords satisfy a regularity assumption. In order to enforce such assumption while maintaining the rate of the code, an explicit encoding process to regular strings is proposed. Recently, there have also been some works on the construction of codes capable of correcting both deletion and substitution \cite{SWWY20, SPCH22}. In their work \cite{SPCH22}, a binary $t$-deletion correcting code has also been proposed with redundancy $(4t-1)\log n + o(\log n).$

There have also been some works on the construction of $q$-ary $2$-deletion codes following the line of work previously discussed. In \cite{Zhou20}, a $q$-ary $2$-deletion codes was constructed with existing binary $2$-deletion codes as the underlying base. In the proposed construction, an additional $3\log n +O_q(\log \log n)$ bits of sketch is required. The work has also considered a simpler construction which produces a code with redundancy $6\log n +O_q(\log \log n)$ that can list decode any $2$ deletions with list size $2.$ In \cite{SC22}, they consider the case when $q>2$ is even and also constant with respect to $n.$ In such case, they proposed a $q$-ary $2$-deletion codes with redundancy $5\log n +O(\log q \log \log n).$ In \cite{SPCH22}, a $q$-ary $t$-deletion $s$-substitution correcting systematic code was proposed where $q$ is at most logarithmic with respect to the code length. In such construction, when $s$ is set to be $0,$ the construction provides a $q$-ary $t$-deletion correcting code of redundancy $4t\log_q n+o(\log_q n).$

\subsection{Our Result}\label{sec:intro-Res}
In this work, we consider a general construction of $q$-ary sketch which is based on sketches of binary $2$-deletion codes. Based on this construction, we show that the resulting construction can be used to construct $q$-ary $2$-deletion codes with lower redundancy compared to the previously best known result. We further note that the encoding and decoding algorithm of such codes are linear with respect to the code length and the square of the logarithm of the alphabet size (Theorem~\ref{MT:UD1}). Such construction is also used to construct $q$-ary deletion codes that can list decode against any $2$ deletions with smaller redundancy compared to the previously best known result. Again, we show that the encoding and decoding complexities of such code is linear with respect to the code length and the square of the logarithm of the alphabet size(Theorem~\ref{MT:LD1}).

In order to take advantage of the construction of the binary $2$-deletion codes proposed in \cite{GH21}, we further modify our construction to allow for the regularity assumption to again be satisfied. This results in $q$-ary deletion codes with better redundancy compared to the general construction we have previously discussed (Theorem~\ref{thm:ConsFromGH21}). Here we note that although the redundancy of the constructed codes is reduced, it comes with the increase of encoding and decoding complexity. More specifically, the resulting codes have encoding complexity of $\mathrm{poly}(n)$ while the efficiency of the decoding complexity relies on the efficiency of the inversion of the encoding process. The main result of this paper is given below.

\begin{theorem}\label{thm:main}
For any positive integer $q>2,$ there exists an explicit and efficiently encodable $q$-ary $2$ deletion correcting code of length $n$ with redundancy $5\log n+10\log\log n + 3\log q+O(1).$

Furthermore,  there exists an explicit and efficiently encodable $q$-ary deletion code of length $n$ with redundancy $4\log n+10\log\log n+2\log q+O(1)$ that can be  list decoded against $2$ deletions with list size of at most $2.$

\end{theorem}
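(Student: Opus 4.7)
The plan is to construct the $q$-ary $2$-deletion code by reducing to the binary $2$-deletion code of Guruswami--H{\aa}stad and adding a small number of $q$-ary congruences. Given $\bx=(x_1,\ldots,x_n)\in\Sigma_q^n$, I would first fix a bit-projection $\phi\colon \Sigma_q\to\{0,1\}$ (for instance, the least significant bit) and form the binary string $\phi(\bx)=(\phi(x_1),\ldots,\phi(x_n))\in\{0,1\}^n$. The key structural property is that deleting the two $q$-ary symbols $x_{i_1},x_{i_2}$ from $\bx$ to obtain $\by$ induces exactly a binary $2$-deletion of $\phi(\bx)$ at the \emph{same} positions, namely $\phi(\by)$. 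I would then require $\phi(\bx)$ to lie in the GH21 binary $2$-deletion code, incurring a redundancy of $4\log n + c\log\log n$ for an absolute constant $c\le 10$; to enforce this on the nonuniform set $\phi^{-1}(0),\phi^{-1}(1)\subseteq\Sigma_q$ while preserving rate, I would use the standard regular-string encoding trick from GH21 as a black box on $\phi(\bx)$.

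Decoding proceeds in two phases. Phase one: on receiving $\by\in\Sigma_q^{n-2}$, compute $\phi(\by)$ and invoke the GH21 decoder to recover $\phi(\bx)\in\{0,1\}^n$ exactly; by comparing $\phi(\bx)$ with $\phi(\by)$, one extracts the set of candidate position pairs $(i_1,i_2)$ for the deletions together with the bit-values $\phi(x_{i_1}),\phi(x_{i_2})$. Phase two: determine the exact positions and the full $q$-ary values $x_{i_1},x_{i_2}$. For this I would append $q$-ary congruences of the form $\sum_{i=1}^{n} i^{k}\,x_i \pmod{M_k}$ for small $k$ and moduli of size $\Theta(n q^{a})$, together with a lightweight tally $\sum_i x_i \pmod{q^b}$. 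Summing the logarithms of the moduli, the total overhead beyond the binary sketch should come out to $\log n + 3\log q + O(1)$ bits, yielding the claimed redundancy $5\log n + 10\log\log n + 3\log q + O(1)$.

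For the list-decodable statement, I would drop (or shrink) the higher-weight congruence, retaining only enough $q$-ary information to cut the candidate pair down to at most two rather than uniquely identifying it. The analysis here mirrors the unique-decoding case, but one candidate is allowed to survive in addition to the true one, saving one factor of $\log n + \log q$ from the $q$-ary overhead and giving $4\log n + 10\log\log n + 2\log q + O(1)$. Encoding efficiency in both cases follows from the efficient encoding of the GH21 binary code combined with the standard bucketing procedure for pushing the $q$-ary codeword into the prescribed residue classes while perturbing only $O(\log n)$ symbols; decoding is linear in $n$ modulo the GH21 decoder (the $q$-ary congruences are trivial to evaluate).

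The main obstacle is the Phase-two argument: after $\phi(\bx)$ is known, the remaining ambiguity is exactly a pair of $q$-ary symbols inserted at a pair of positions that are constrained, but not fixed, by the match between $\phi(\bx)$ and $\phi(\by)$. One must verify that the joint system of $q$-ary congruences, restricted to this constrained ambiguity set, has a unique (resp.\ at most two) solution(s); this requires balancing the moduli $M_k$ against the combinatorial size of the ambiguity set, and is the delicate numerical heart of the proof. A secondary subtlety is that the reduction from $\phi(\bx)$ to a GH21-regular string must not distort the above correspondence between $q$-ary and binary deletions, which is why the encoding trick has to be applied to $\phi(\bx)$ rather than to $\bx$ directly.
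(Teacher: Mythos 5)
Your high-level architecture (binary GH21 code as a black box plus a few $q$-ary congruences, with budget $\log n+3\log q$ beyond the binary sketch) matches the paper's, but the crucial design choice differs, and the step you yourself flag as ``the delicate numerical heart'' is exactly where the argument is missing and where your choice makes it hard or impossible. You project symbol-wise via a bit map $\phi$. The paper instead uses the descent-indicator map $\varphi$, with $a_i=1$ iff $i=1$ or $\alpha_i\ge\alpha_{i-1}$ (Definition~\ref{def:qto2}); this map does \emph{not} commute with deletions position-by-position (only the weaker Claim~\ref{claim:varphidel} holds, that $\varphi(\balpha')$ is \emph{some} subsequence of $\varphi(\balpha)$), but it buys the property that each binary run corresponds to a monotone segment of $\balpha$. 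Consequently, once the binary decoder returns $\mathbf{a}$ and hence the two runs the deletions came from, the position of a reinserted value \emph{within} a run is forced by monotonicity; the only residual ambiguity is the unordered pair of deleted values (recovered from $\sum\alpha_i$ and $\sum\alpha_i^2$ modulo $q$, i.e.\ $2\log q$ bits, via the quadratic formula) and which value goes to which run (resolved by the run-weighted sum $s^{(3)}$ modulo $nq$, i.e.\ $\log n+\log q$ bits). With your coordinate projection there is no such collapse: a binary run is merely a block of symbols sharing a low bit, the high bits are unconstrained, and each choice of position-within-run and high bits yields a genuinely different candidate $\bx$. Without a regularity constraint the candidate set can have size $\Omega\bigl(n^2(q/2)^2\bigr)$ (e.g.\ $\by$ all zeros, all low bits $0$), so any sketch injective on it needs at least $2\log n+2\log q-O(1)$ bits, strictly exceeding your $\log n+3\log q+O(1)$ budget whenever $q\ll n$; with GH21's regularity the runs shrink to $O(\log n)$ and the counting obstruction relaxes to $2\log\log n+2\log q$, but you still have to exhibit congruences $\sum_i i^k x_i\pmod{M_k}$ that are injective on every such (positions within two runs)\,$\times$\,(high bits) ambiguity set, and no such verification is given. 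This is not a routine check --- it is the entire content of Lemma~\ref{lem:gencons} --- and the paper's choice of $\varphi$ exists precisely to make it disappear.

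Two further problems. First, for odd $q$ (which the theorem covers, e.g.\ $q=3$) any symbol-wise projection $\phi$ has preimages of sizes $(q\pm1)/2$, so conditioning the low-bit string to lie in the GH21 code costs about $n\log(q/2)-\sum_i\log|\phi^{-1}(b_i)|=\Theta(n/q^2)$ bits of rate for weight-balanced regular strings; for constant odd $q$ this is $\Theta(n)$ redundancy, which destroys the claimed bound. The paper avoids any per-symbol loss by defining $d$-regularity directly for $q$-ary strings (via $\varphi$) and building an injection $\Pi_q:\Sigma_q^{n-1}\to R_n$ (Lemmas~\ref{lem:Fm} and~\ref{lem:QnlargeEff}), losing only one $q$-ary symbol. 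Second, your list-decoding claim inherits the gap: in the paper, dropping the single $\log(nq)$-bit sketch $s^{(3)}$ leaves exactly one binary ambiguity (which of the two recovered values goes in which run), hence list size $2$; in your setup, dropping a congruence leaves the whole positional ambiguity within runs unresolved, so a list size of $2$ does not follow.
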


\begin{rem}{\rm As we mentioned in the earlier subsection, the lowest redundancy of $q$-ary $2$ deletion correcting codes of length $n$ with explicit construction in literature is given by  $5\log n +O(\log q \log \log n)$ in \cite{SC22}. It is easy to see that our redundancy is smaller than that of \cite{SC22}.
}
\end{rem}
\subsection{Our Technique}\label{sec:intro-Tech}
In order to construct our sketch, we define a function that transforms our $q$-ary codewords to a binary string which can then be used as an input to the underlying base binary sketch. This is then complemented with additional $q$-ary sketches that the original $q$-ary codeword is required to satisfy. Having these two steps of sketches, the decoding algorithm can also then be designed in two separate steps. First, based on a received $q$-ary string, the same transformation function can be applied to generate the corresponding binary string, which can then be shown to be obtainable by deleting $2$ elements from the binary string obtained from the sent codeword. This, together with the binary sketches, allows us to recover the original binary string corresponding to the sent codeword. We can then utilize such binary string as well as the $q$-ary sketches to recover the original codeword. An observation to be made in such decoding algorithm is that the last step of the decoding algorithm, which requires $2\log q$-bits of sketches can be removed and this results in a list decoding algorithm with list size of at most $2.$ This provides us with the list-decoding variant of our construction with smaller sketch size.

In order to utilize the binary $2$-deletion code proposed in \cite{GH21}, we require that our message is encoded to a $q$-ary string with some regularity assumption. Similar to \cite{GH21}, we show that such encoding algorithm can be designed explicitly and the encoding process has a sufficiently large domain size to maintain the redundancy size.

\subsection{Organization}\label{sec:intro-org}
The remainder of the paper is organized as follow. In Section~\ref{sec:prelim}, we review some basic definitions and existing results that will be useful in the discussion in the following sections. Section \ref{sec:gencons1} focuses on a general construction of $q$-ary $2$-deletion correcting code from existing binary $2$-deletion correcting code with specific form. Such construction is then realized using previously proposed binary $2$ deletion correcting codes, the resulting redundancy is considered and compared in Section~\ref{sec:GenConsExandComp}. In the same section, we also consider a variant of such construction to allow for list-decodable $2$-deletion codes with smaller redundancy. Lastly, a specific construction based on the binary $2$-deletion correcting code proposed in \cite{GH21} is considered and analyzed in Section~\ref{sec:ConsFromGH21}.

\section{Preliminary}\label{sec:prelim}

Let $q$ be a prime number, $n$ be positive integers and for any positive integer $m,$ let $\Sigma_m=\{0,1,\cdots, m-1\}\subseteq \mathbb{Z}.$ 
Given a string $\mathbf{x}=(x_1,\cdots, x_n)$ and a positive string $m\le n,$ a string $\mathbf{y}=(y_1,\cdots, y_m)$ is said to be a subsequence of $\mathbf{x}$ if there exists $1\le i_1<i_2<\cdots<i_m\le n$ such that $y_j=x_{i_j}$ for $j=1,\cdots, m.$ Furthermore, $\mathbf{y}$ is said to be a substring of $\mathbf{x}$ if it is a subsequence of $\mathbf{x}$ and for any $j=2,\cdots, m, i_j=i_{j-1}+1.$

\begin{defn}\label{def:qto2}
\begin{enumerate}
\item Let $\balpha=(\alpha_1,\cdots, \alpha_n)\in \Sigma_q^n.$ Define $\mathbf{a}=(a_1,\cdots, a_n)\in \Sigma_2^n$ such that for $i=1,\cdots, n,$
\[a_i=\left\{
\begin{array}{cc}
1, &\mathrm{~if~}i=1\mathrm{~or~}\alpha_i\ge\alpha_{i-1}\\
0, &\mathrm{~otherwise}
\end{array}
\right..\]
We denote such map by $\varphi_n:\Sigma_q^n\rightarrow \Sigma_2^n$ where it applies to any length $n.$ We observe that $\varphi_n$ can be calculated in $O(n)$ time.
\item Suppose that $\balpha\in \Sigma_q^n$ is associated to $\mathbf{a}=(a_1,\cdots, a_n)\in \Sigma_2^n$ by the definition above. Recall that a run in $\mathbf{a}$ is defined to be a substring of $\mathbf{a}$ with the same symbol. More specifically, $\mathbf{a}$ has a run from position $i$ to position $j$ for some $1\le i\le j\le n$ if $a_i=a_{i+1}=\cdots=a_j\in \Sigma_2,$ either $i=1$ or $a_{i-1}=1-a_i$ and either $j=n$ or $a_{j+1}=1-a_j.$ We can define $\mathbf{a}$ as a sequence of runs. More specifically, assuming that $\mathbf{a}$ has $r$ runs, there exists $b\in \Sigma_2$ such that $\mathbf{a}=(1^{\ell_1}\|0^{\ell_2}\|\cdots\| b^{\ell_r})$ where $\ell_i>0$ and $\sum_{i=1}^r \ell_i=n$ and
\[b=\left\{
\begin{array}{cc}
1,\mathrm{~if~}i\mathrm{~is~odd}\\
0,\mathrm{~otherwise}
\end{array}
\right..\]
For non-negative $i,$ we define
\[j_i=\left\{
\begin{array}{cc}
0,\mathrm{~if~}i=0\\
\ell_i+j_{i-1},\mathrm{~otherwise}
\end{array}
\right..\]

For $i=1,\cdots, r,$ we define the $i$-th run of $\balpha$ associated to $\mathbf{a}$ by the vector $(\alpha_{j_{i-1}+1},\cdots, \alpha_{j_i}).$ It is easy to see that for any $i=1,\cdots, r,$ the $i$-th run of $\balpha$ is either decreasing or non-decreasing. Lastly, we define $\beta_i=\alpha_{j_{i-1}+1}+\alpha_{j_{i-1}+2}+\cdots+\alpha_{j_i}.$ In the following, given $\balpha\in \Sigma_q^n$ and $\mathbf{a}=\varphi_n(\balpha),$ we denote by $r_\balpha$ the number of runs of $\mathbf{a}.$ Furthermore, for $i=1,\cdots, r,$ we call $\beta_i$ by the $i$-th \emph{run sums} of $\balpha.$ We again note that such $\beta_i$ can be computed in $O(n)$ given the value of $\varphi_n(\balpha).$

\end{enumerate}
\end{defn}

Next, we provide an observation on $\varphi$ with respect to a deletion operation.
\begin{claim}\label{claim:varphidel}
Let $m$ be a positive integer, $\balpha\in \Sigma_q^m$ and $\mathbf{a}=\varphi_m(\balpha)\in \Sigma_2^m.$ Suppose that $\balpha'\in \Sigma_q^{m-1}$ is a subsequence of $\balpha$ by deleting one of its entries and suppose that $\mathbf{a}'=\varphi_{m-1}(\balpha)\in\Sigma_2^{m-1}.$ Then $\mathbf{a}'$ is a subsequence of $\mathbf{a}.$
\end{claim}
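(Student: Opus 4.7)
The plan is to introduce $k\in\{1,\ldots,m\}$ for the index of the coordinate deleted from $\balpha$ and then track, index by index, how $\mathbf{a}'=\varphi_{m-1}(\balpha')$ compares with $\mathbf{a}=\varphi_m(\balpha)$. Unwinding the definition of $\varphi$ gives $a'_i=a_i$ for every $i<k$ (the comparisons that define these bits are untouched by the deletion) and $a'_i=a_{i+1}$ for every $i>k$ (the comparison of position $i+1$ with position $i$ in $\balpha$ survives verbatim, only its index shifts). Hence $\mathbf{a}$ and $\mathbf{a}'$ can disagree only at the single coordinate $k$: in the interior case $2\le k\le m-1$, the bit $a'_k$ compares $\alpha_{k+1}$ with $\alpha_{k-1}$, while in the boundary cases $k=1$ and $k=m$ the sequence $\mathbf{a}'$ is literally $\mathbf{a}$ with its second or last entry dropped, and we are already done.

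The substantive step of the proof is the claim
\[
a'_k\in\{a_k,\,a_{k+1}\},
\]
which I would verify by a short case split on the pair $(a_k,a_{k+1})$. If $a_k\ne a_{k+1}$ the statement is trivial since $a'_k$ is a bit. If $a_k=a_{k+1}=1$, the defining inequalities give the chain $\alpha_{k-1}\le\alpha_k\le\alpha_{k+1}$, which forces $a'_k=1$. Symmetrically, if $a_k=a_{k+1}=0$, then $\alpha_{k-1}>\alpha_k>\alpha_{k+1}$, forcing $a'_k=0$. This chaining is the only nontrivial use of the definition of $\varphi$ in the whole argument, and I expect it to be the main (and essentially only) obstacle.

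Once the claim is established the conclusion is immediate: if $a'_k=a_k$ then $\mathbf{a}'$ is obtained from $\mathbf{a}$ by deleting coordinate $k+1$, and if $a'_k=a_{k+1}$ then $\mathbf{a}'$ is obtained from $\mathbf{a}$ by deleting coordinate $k$. Either way, $\mathbf{a}'$ is a subsequence of $\mathbf{a}$, which is exactly what the claim asserts.
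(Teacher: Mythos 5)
Your proposal is correct and follows essentially the same route as the paper's proof: both reduce to the single ambiguous bit at the deletion index, handle the boundary cases $k=1$ and $k=m$ directly, and settle the interior case by showing $a'_k\in\{a_k,a_{k+1}\}$ via the same case split on whether $a_k=a_{k+1}$, using the chained inequalities $\alpha_{k-1}\le\alpha_k\le\alpha_{k+1}$ or $\alpha_{k-1}>\alpha_k>\alpha_{k+1}$. No gaps.
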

\begin{proof}
Let $i\in[m]$ such that $\alpha_i$ is deleted from $\balpha$ to obtain $\balpha'.$ First suppose that $i=1.$ In such case, we note that for $j=2,\cdots, m-1, a_j'=a_{j+1}$ and $a_1'=1=a_1.$ Hence it is easy to see that $\mathbf{a}'$ is a subsequence of $\mathbf{a}.$ Secondly, suppose that $i=m.$ In this case, it is easy to see that for $j=1,\cdots, m-1, a_j'=a_j.$ Hence, it is again clear to see that $\mathbf{a}'$ is a subsequence of $\mathbf{a}.$ Lastly, suppose that $2\le i\le m-1.$ Note that in this case, for $j\in[m-1],$
\[a_j'=\left\{
\begin{array}{cc}
a_j,&\mathrm{~if~}j\le i-1\\
1,&\mathrm{~if~}j=i\mathrm{~and~} \alpha_{i+1}\ge \alpha_{i-1}\\
0,&\mathrm{~if~}j=i\mathrm{~and~} \alpha_{i+1}<\alpha_{i-1}\\
a_{j+1},&\mathrm{~if~} j\ge i
\end{array}
\right.\]
Hence to prove the claim, it is sufficient to show that $a_i'$ is a subsequence of $(a_i,a_{i+1}).$ Since $a_i',a_i,a_{i+1}\in \Sigma_2,$ if $a_i\neq a_{i+1}, a_i'$ must be equal to either $a_i$ or $a_{i+1}.$ Hence it must be a subsequence of $(a_i,a_{i+1}).$ On the other hand, if $a_i=a_{i+1},$ we must have either $\alpha_{i+1}\ge \alpha_i\ge \alpha_{i-1},$ which implies that $a_i=a_{i+1}=a_i'=1,$ or $\alpha_{i+1}<\alpha_i<\alpha_{i-1},$ which implies that $a_i=a_{i+1}=a_i'=0.$ So in any case, we again obtain that $a_i'$ is a subsequence of $(a_i,a_{i+1}).$ This conclude our proof.

\end{proof}

Note that by applying Claim~\ref{claim:varphidel} multiple times, we obtain the following corollary.
\begin{cor}\label{cor:varphigen}
Let $m<n$ be a positive integer, $\balpha\in \Sigma_q^n$ and $\mathbf{a}=\varphi_n(\balpha).$ Suppose that $\balpha'\in \Sigma_q^{m}$ is a subsequence of $\balpha$ by deleting $n-m$ of its entries and suppose that $\mathbf{a}'=\varphi_m(\balpha').$ Then $\mathbf{a}'$ is a subsequence of $\mathbf{a}.$
\end{cor}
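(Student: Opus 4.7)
The plan is a straightforward induction on the number of deletions $k = n-m$, using Claim~\ref{claim:varphidel} as the base case and key inductive tool. The statement is essentially the iterated version of that claim, so the job is really just to chain single deletions together and invoke transitivity of the subsequence relation.

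More concretely, I would fix a realization of $\balpha'$ as a subsequence of $\balpha$, i.e.\ an increasing tuple of indices $1 \le i_1 < i_2 < \cdots < i_m \le n$ with $\balpha' = (\alpha_{i_1},\ldots,\alpha_{i_m})$. Let $\{j_1,\ldots,j_{n-m}\}$ be the complementary set of indices (the positions deleted), sorted in any convenient order (say ascending). I would then define a chain of intermediate strings
\[
\balpha = \balpha^{(0)},\ \balpha^{(1)},\ \ldots,\ \balpha^{(n-m)} = \balpha',
\]
where $\balpha^{(t)} \in \Sigma_q^{n-t}$ is obtained from $\balpha^{(t-1)}$ by deleting the single entry in $\balpha^{(t-1)}$ that corresponds to position $j_t$ in the original $\balpha$. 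Setting $\mathbf{a}^{(t)} = \varphi_{n-t}(\balpha^{(t)})$, an immediate application of Claim~\ref{claim:varphidel} at each step gives that $\mathbf{a}^{(t)}$ is a subsequence of $\mathbf{a}^{(t-1)}$.

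The conclusion then follows by transitivity of the subsequence relation: since for each $t=1,\ldots,n-m$ the string $\mathbf{a}^{(t)}$ is a subsequence of $\mathbf{a}^{(t-1)}$, composing the witnessing index embeddings yields that $\mathbf{a}' = \mathbf{a}^{(n-m)}$ is a subsequence of $\mathbf{a} = \mathbf{a}^{(0)}$. Equivalently, one could phrase this as induction on $k = n-m$: the case $k=1$ is exactly Claim~\ref{claim:varphidel}, and for the inductive step, factor the $k$-deletion as one deletion followed by a $(k-1)$-deletion, apply Claim~\ref{claim:varphidel} to produce an intermediate $\mathbf{a}^{(1)}$, then apply the inductive hypothesis to the $(k-1)$-deletion passing from $\balpha^{(1)}$ to $\balpha'$.

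There is essentially no obstacle here beyond bookkeeping; the only thing to be mildly careful about is that Claim~\ref{claim:varphidel} is applied \emph{at each intermediate length}, so one must remember to re-invoke $\varphi$ at the appropriate length $n-t$ for each $t$. Since $\varphi$ is defined for all lengths (the claim explicitly notes that $\varphi_n$ applies for any $n$), this is just a matter of notation rather than a genuine hurdle.
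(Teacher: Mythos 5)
Your proposal is correct and matches the paper's own (implicit) argument: the paper derives the corollary by ``applying Claim~\ref{claim:varphidel} multiple times,'' which is precisely your chain of intermediate strings $\balpha^{(t)}$ together with transitivity of the subsequence relation. Your write-up simply makes the bookkeeping explicit.
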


We will briefly discuss some results in~\cite{GH21} that will be useful in our discussion. First, we discuss the definition of regular binary string.

\begin{defn}\label{def:RegStr} Let $d$ be a positive integer. We say that a string $\mathbf{x}\in \Sigma_2^n$ is $d$-regular if each substring of $\mathbf{x}$ of length at least $d \log_2 n$ contains both $00$ and $11$ as substrings.

We can further extend this definition to a $q$-ary string. More specifically, we say that $\mathbf{x}\in \Sigma_q^n$ is $d$-regular if its associated binary string $\mathbf{a}\in \Sigma_2^n$ is $d$-regular. This corresponds to the requirement that any substring $\mathbf{x}'$ of $\mathbf{x}$ of length at least $d\log_2 n$ must contain a substring of length three $(y_1,y_2,y_3)$ such that $y_1>y_2>y_3.$ Furthermore, we also require that it contains a substring $(z_1,z_2,z_3)$ such that $z_1\le z_2\le z_3$ unless $\mathbf{x}'$ contains the first entry of $\mathbf{x}.$ In the case that $\mathbf{x}'$ contains the first entry of $\mathbf{x}$ and it does not contain any substring $(z_1,z_2,z_3)$ such that $z_1\le z_2\le z_3,$ then we require that $x_1\le x_2$ to allow for the first two entries of $\mathbf{a}$ to all be $1.$
\end{defn}


%

In this work, we consider constructions of codes with the use of sketch functions. We note that by the Pigeonhole Principle, the construction of a $q$-ary sketch function with output size of $\ell$-bits that can be used to correct up to $2$ deletions implies the existence of a $q$-ary $2$-deletion codes with redundancy of at least $\ell.$ When explicitness of the code is required, the redundancy required is larger. More specifically, similar to the discussion in \cite[Lemma $2.1$]{GH21} and \cite[Lemma $1$]{Zhou20}, we can obtain the following relation.

\begin{lemma}\label{lem:Sketch-Code}
Fix an integer $k\geqq 1.$ Let $s:\Sigma_q^n\rightarrow \Sigma_2^{c_1 \log n + c_2\log q + O(1)}$ be an efficiently computable function such that for any $\mathbf{x}\in \Sigma_q^n,$ given $s(\mathbf{x})$ and any $\mathbf{y}\in \Sigma_q^{n-2},$ a subsequence of $\mathbf{x},$ the original $\mathbf{x}$ can be efficiently and uniquely recovered. Then there exists an efficiently encodable and decodable code $\mathcal{C}\subseteq \Sigma_q^N$ of size $q^n$ with length $N\leq n+c_1\log n + c_2\log q + O(\log \log n)$ such that $\mathcal{C}$ is a $2$-deletion correcting codes.
\end{lemma}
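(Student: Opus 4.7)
My plan is to construct $\mathcal{C}$ by a systematic concatenation of the message with a protected, self-delimiting copy of the sketch. Concretely, define
\[
\mathsf{Enc}(\mathbf{x}) \;=\; \pi(\mathbf{x}) \,\|\, B \,\|\, E\bigl(s(\mathbf{x})\bigr),
\]
where $\pi$ is a short pre-encoding of $\mathbf{x}$ used to enforce a mild run-length constraint, $B$ is a fixed buffer that remains identifiable after any two deletions, and $E$ is a $q$-ary $2$-deletion correcting inner code applied to $s(\mathbf{x})$ viewed as a $q$-ary string of length $\ell = c_1\log n + c_2\log q + O(1)$ (one $q$-ary symbol per sketch bit, using only the values $0,1\in\Sigma_q$). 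The decoder, on input a received word $\mathbf{y}$ with at most two deletions, (i) locates the surviving copy of $B$ to split $\mathbf{y}$ into a left (message) portion and a right (sketch) portion, (ii) runs the inner decoder on the right portion to recover $s(\mathbf{x})$, and (iii) passes the left portion (further truncated to length $n-2$ if necessary) together with $s(\mathbf{x})$ to the promised recovery algorithm for $\mathbf{x}$.

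For the buffer I would fix a symbol $\sigma \in \Sigma_q$ (for $q\geq 3$, any $\sigma\notin\{0,1\}$ works; for $q=2$ one uses a distinctive short binary pattern such as $0^{c}10^{c}$) and set $B = \sigma^{b}$ for a small constant $b$. The pre-encoding $\pi$ guarantees that $\pi(\mathbf{x})$ contains no $\sigma$-run of length $\geq b-2$; this can be implemented by a standard run-breaking substitution at a cost of $O(1)$ extra symbols. Since $E(s(\mathbf{x}))$ uses only $\{0,1\}$ and hence no $\sigma$, after two deletions the received word has exactly one $\sigma$-run of length at least $b-2 \geq 3$, which unambiguously marks the buffer. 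For the inner code $E$, the sketch length $\ell$ is only polylogarithmic in $nq$, so even a brute-force construction from the pigeonhole principle yields a $q$-ary $2$-deletion code with redundancy $O(\log \ell) = O(\log\log n)$ symbols, and both encoding and decoding can be implemented by table lookup in $\mathrm{poly}(\ell)$ time, hence efficiently.

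Summing the three contributions, $|\mathsf{Enc}(\mathbf{x})| = n + O(1) + b + \ell + O(\log\log n) = n + c_1\log n + c_2\log q + O(\log\log n)$, matching the claimed $N$; injectivity (and hence $|\mathcal{C}| = q^n$) follows from the correctness of the three-step decoder. I expect the only nontrivial step to be justifying (i), where one must check, by a short case analysis on how the two deletions distribute over $\pi(\mathbf{x})$, $B$, and $E(s(\mathbf{x}))$, that the longest $\sigma$-run in $\mathbf{y}$ always lies strictly inside the residue of $B$; this is ensured by the run-length bound imposed by $\pi$ together with the choice of $b$. Steps (ii) and (iii) are then immediate applications of the inner decoder and the hypothesis on $s$, respectively, which together recover $\mathbf{x}$ uniquely and efficiently.
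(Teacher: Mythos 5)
Your high-level blueprint --- a systematic concatenation of the message with a protected copy of the sketch, decoded in three stages --- matches the paper's, but two of your specific mechanisms have genuine gaps. First, you transmit $\pi(\mathbf{x})$ while attaching the sketch $s(\mathbf{x})$ of the \emph{unencoded} message. The hypothesis on $s$ only promises recovery of $\mathbf{x}$ from $s(\mathbf{x})$ together with a length-$(n-2)$ subsequence of $\mathbf{x}$ itself; what your decoder holds in step (iii) is a subsequence of $\pi(\mathbf{x})$, which (unless $\pi$ is the identity, in which case it enforces no run-length constraint) need not be a subsequence of $\mathbf{x}$, so the promised recovery algorithm cannot be invoked. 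You would have to sketch $\pi(\mathbf{x})$ instead and invert $\pi$ at the end, which requires $s$ to be defined on length $n+O(1)$ and $\pi$ to be injective and efficiently invertible; moreover, the existence of an injective run-breaking map costing only $O(1)$ extra symbols is itself a nontrivial counting-and-encoding statement (of the same flavor as Lemmas~\ref{lem:Fm} and~\ref{lem:QnlargeEff}), not a one-line substitution.

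Second, the buffer-location argument in step (i) fails because deletions \emph{merge} runs. Even if every $\sigma$-run of $\pi(\mathbf{x})$ has length at most $b-3$, two deletions inside the message portion can fuse up to three such runs into a single $\sigma$-run of length up to $3(b-3)\geq b-2$ (already for the minimal choice $b=5$), so the received word can contain a spurious $\sigma$-run at least as long as the buffer's residue; symmetrically, the buffer can coalesce with adjacent $\sigma$-runs of $\pi(\mathbf{x})$, so even once located, the split point is ambiguous by several positions. The constants can be repaired, but none of this machinery is needed: the paper keeps the message untouched, uses no marker, and parses by length alone, relying on the observation that after $\delta\leq 2$ deletions applied to $u_1\|u_2\|u_3$ the first $|u_1|-2$ symbols of the received word are a subsequence of $u_1$ and the last $|u_3|-2$ are a subsequence of $u_3$. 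It then protects $s(\mathbf{x})$ not with an inner $2$-deletion code but with a second-level binary sketch $\bar{s}(s(\mathbf{x}))$ of length $O(\log\log n)$ repeated three times; this also sidesteps your appeal to a brute-force inner code, whose lookup table cannot actually be constructed in $\mathrm{poly}(\ell)$ time.
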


Intuitively, this can be achieved by the use of the given $q$-ary sketch $s$ as well as a binary sketch $\bar{s}$ with output length $O(\log(n'))$ given any input length $n',$ which can be satisfied by sketches proposed in \cite{BGZ18, GS19, SRB18}. The encoding is then done by appending the message $\mathbf{m}$ by the sketch $s(\mathbf{m})$ followed by repeating the sketch $\bar{s}(s(\mathbf{m}))$ three times. The decoding can then be done by first recovering $\bar{s}(s(\mathbf{m})),$ followed by using $\bar{s}(s(\mathbf{m}))$ to recover $s(\mathbf{m}).$ Lastly, having $s(\mathbf{m}), \mathbf{m}$ can then be recovered.

We can then see that the encoding complexity equals the complexity to calculate $s$ given an input of length $n$ and to calculate of $\bar{s}$ with input of length $c_1\log n+c_2\log q+O(1).$ Note that since $\bar{s}$ is assumed to be efficiently computable, its complexity is polynomial in the input length. Hence in our case, the complexity is polynomial in $O(\log n+\log q).$ Hence the encoding complexity asymptotically equals the complexity of calculating $s$ with input length $n,$ which is again assumed to be efficient.

Similarly, the decoding process contains three steps. The first step includes the decoding process of repetition code with length $O(\log (\log n+\log q))$ which is then sub-linear with respect to $n$ and $q.$ The second step is the decoding process of $\bar{s}$ with input length $O(\log n+\log q).$ Since the decoding process of $\bar{s}$ is polynomial in its length by assumption, its complexity in our application is hence $\mathrm{poly}(\log n+\log q).$  The last step is the decoding process of $s$ with input length $n.$ Hence, the dominating factor of the decoding complexity is the complexity of the decoding process of $s$ with input length $n.$

Due to the relation described in Lemma~\ref{lem:Sketch-Code}, in the remainder of this work, we focus on the construction of such sketch function and its output length, which can be directly seen as the asymptotic redundancy of the resulting explicit $q$-ary $2$-deletion correcting code.

\section{General Construction}\label{sec:gencons1}

Here we construct a $q$-ary sketch function that can correct up to $2$ deletions from an existing binary sketch function capable of correcting up to $2$ deletions. Suppose that there exists a sketch function $\bar{s}$ such that for any $\mathbf{x}\in \Sigma_2^n,$ given $\bar{s}(\mathbf{x})$ and any subsequence $\mathbf{y}\in \Sigma_2^{n-2}$  of $\mathbf{x},$ the original $\mathbf{x}$ can be uniquely determined.

Let $\tilde{s}$ be a function that takes a $q$-ary string $\mathbf{x}$ as an input, computes $\bar{\mathbf{x}}=\varphi(\mathbf{x})$ and outputs $\bar{s}(\bar{\mathbf{x}}).$ We further define the following three sketches $s^{(1)},s^{(2)},s^{(3)}:\Sigma_q^n\rightarrow \mathbb{Z}$ such that for $\balpha=(\alpha_1,\cdots, \alpha_n)\in \mathbb{Z}_q^n$ with $r_{\balpha}$ runs and run sums $\beta_1,\cdots,\beta_{r_\balpha},$
\begin{enumerate}
\item $s^{(1)}(\balpha)=\sum_{i=1}^n \alpha_i,$
\item $s^{(2)}(\balpha)=\sum_{i=1}^n \alpha_i^2,$ and
\item $s^{(3)}(\balpha)=\sum_{i=1}^{r_\balpha}i\beta_i.$
\end{enumerate}

It is easy to see that for the three sketches defined above, they can be computed in $O(n).$

\begin{lemma}\label{lem:gencons}
For any $\balpha\in \Sigma_q^n,$ given the values of $s^{(1)}(
\balpha)\pmod{q},s^{(2)}(\balpha)\pmod{q},$ $s^{(3)}(\balpha)\pmod{nq},$ $\tilde{s}(\balpha),$ and any  subsequence $\balpha'=(\alpha'_1,\cdots, \alpha'_{n-2})\in \Sigma_q^{n-2}$  of $\balpha,$ the original $\balpha$ can be uniquely recovered.
\end{lemma}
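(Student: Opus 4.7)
The plan is a two-phase reconstruction. In the first phase I recover the binary companion $\mathbf{a}=\varphi_n(\balpha)$ of $\balpha$; in the second I use the three numerical sketches to pin down the exact $q$-ary values. For the first phase I compute $\mathbf{a}':=\varphi_{n-2}(\balpha')$ directly from the given subsequence. Corollary~\ref{cor:varphigen} guarantees that $\mathbf{a}'$ is a length-$(n-2)$ subsequence of $\mathbf{a}$, so feeding $(\tilde s(\balpha),\mathbf{a}')$ into the inversion of the assumed binary $2$-deletion sketch $\bar s$ recovers $\mathbf{a}$ uniquely. This reveals $r_\balpha$, the run lengths $\ell_1,\ldots,\ell_{r_\balpha}$, and the direction of each run (non-decreasing for odd-indexed runs, strictly decreasing for even-indexed ones).

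In the second phase I locate the pair of runs hosting the two deletions and then recover the missing values. Comparing the run profiles of $\mathbf{a}$ and $\mathbf{a}'$ already cuts the host-run indices down to a short list of candidate pairs $(j_1,j_2)$; for each candidate the known lengths $\ell_i$ together with monotonicity force a unique partition of the entries of $\balpha'$ among the runs of $\balpha$, so the surviving contribution to $s^{(3)}$ can be computed. Writing $u,v$ for the two deleted values (in runs $j_1$ and $j_2$ respectively), I extract
\begin{align*}
u+v &\equiv s^{(1)}(\balpha)-\textstyle\sum_j \alpha_j' \pmod q,\\
u^2+v^2 &\equiv s^{(2)}(\balpha)-\textstyle\sum_j (\alpha_j')^2 \pmod q,\\
j_1 u + j_2 v &\equiv s^{(3)}(\balpha)-\textstyle\sum_i i\,\tilde\beta_i \pmod{nq},
\end{align*}
where $\tilde\beta_i$ denotes the sum of the surviving entries of the $i$-th run of $\balpha$. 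Since $q$ is prime, the first two congruences determine the unordered pair $\{u,v\}$ as the root set of the quadratic $t^2-(u+v)t+uv$ over $\F_q$ via $2uv=(u+v)^2-(u^2+v^2)$. The third congruence, read modulo $nq$, then picks out the correct assignment of $u,v$ to $(j_1,j_2)$, after which the monotonicity inside each host run forces the exact insertion positions.

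The main obstacle will be the case analysis needed to confirm that $(\mathbf{a},\mathbf{a}')$ together with the third congruence really isolates a unique $(j_1,j_2)$ and a unique assignment. Two delicate subcases stand out: both deletions falling inside a single run, for which the third congruence collapses to $j(u+v)$ and one needs the structural data to pin down $j$; and deletions sitting at run boundaries, which can cause a length-$1$ run of $\mathbf{a}$ to disappear in $\mathbf{a}'$ and thus change $r_{\balpha'}$ relative to $r_\balpha$. In each such situation I would argue that the candidate list produced from $(\mathbf{a},\mathbf{a}')$ is short enough that the residues $j_1u+j_2v\bmod nq$, in view of the natural bound $j_1u+j_2v\le 2(n-1)(q-1)$, distinguish the remaining alternatives and close the uniqueness argument.
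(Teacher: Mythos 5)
Your proposal follows essentially the same route as the paper's proof: recover $\mathbf{a}=\varphi_n(\balpha)$ from $\mathbf{a}'$ via the binary sketch (using Corollary~\ref{cor:varphigen}), recover the unordered pair of deleted values from the first two power sums modulo the odd prime $q$, and disambiguate the assignment to runs with $s^{(3)}\pmod{nq}$, including the same delicate subcases where a length-one run disappears. The only point to tighten is the final size argument: what makes the residue modulo $nq$ decisive is not the bound $j_1u+j_2v\le 2(n-1)(q-1)$ on the value itself but the fact that the two competing candidates differ by exactly $(j_2-j_1)(v-u)$, which lies strictly between $0$ and $nq$, exactly as in the paper's four-case analysis.
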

\begin{proof}
Let $\mathbf{a}'=(a_1',\cdots, a'_{n-2})=\varphi(\balpha').$ By Corollary~\ref{cor:varphigen}, we have that $\mathbf{a}'$ is a subsequence of $\mathbf{a}.$ So we have $\mathbf{a}'\in \Sigma_2^{n-2},$ a subsequence of $\mathbf{a}$ obtained by deleting two entries. Furthermore, we also have $\tilde{s}(\balpha)=\bar{s}(\mathbf{a}).$ Hence by assumption, given these, we can uniquely determine $\mathbf{a}=(1^{\ell_1}\|0^{\ell_2}\|\cdots\|1^{\ell_{r_\balpha}})$ where $b=1$ if $r_\balpha$ is odd and $b=0$ otherwise. This means that we also obtain the runs of $\mathbf{a}$ as well as their lengths $\ell_j$ where $\ell_i>0$ and $\sum_{i=1}^{r_{\balpha}} \ell_i=n.$ In particular, we also obtain the values of $1\le k_1\le k_2\le r_\balpha$ that determines the runs where the two entries of $\balpha$ are deleted from.

Suppose that the deleted symbols from $\balpha$ in the $k_1$-th and $k_2$-th runs are $v_1,v_2\in\Sigma_q$ respectively. By assumption, we know the values of $s^{(i)}\equiv s^{(i)}(\balpha)\pmod{q}$ for $i=1,2.$ Given $\balpha',$ we can also compute $\hat{s}^{(i)}\equiv s^{(i)}(\balpha')\pmod{q}.$ Then for $i=1,2,$ we have $v_1^i+v_2^i\equiv \Delta_i \pmod{q}$ where $\Delta_i\triangleq s^{(i)}-\hat{s}^{(i)}\pmod{q}.$ Hence we have the following system of equations
\begin{equation}\label{eq:findvi}
\left\{
\begin{array}{ccc}
v_1+v_2&\equiv&\Delta_1\pmod{q}\\
v_1^2+v_2^2&\equiv&\Delta_2\pmod{q}
\end{array}
\right.\end{equation}
Squaring the first equivalence and subtracting the second from it yields $2v_1v_2\equiv \Delta_1^2-\Delta_2\pmod{q}.$ Since $q$ is an odd prime, $2^{-1}\pmod{q}$ exists and define $\Delta_0\in \mathbb{Z}_q$ such that $\Delta_0\equiv 2^{-1}\left(\Delta_1^2-\Delta_2\right)\pmod{q}.$ Hence treating the equations over $\mathbb{Z}_q,$ we have $v_1+v_2=\Delta_1$ and $v_1v_2=\Delta_0.$ This implies that $v_1$ and $v_2$ are roots of the quadratic equation $x^2-\Delta_1 x + \Delta_0=0.$ Define $S=\{a\in \mathbb{Z}_q:\exists b\in \mathbb{Z}_q, b^2\equiv a\pmod{q}\}$ be the set of quadratic residues modulo $q.$ Note that by definition, given that $b^2\equiv a\pmod{q},$ we must also have $(q-b)^2\equiv a\pmod{q}.$ Hence for any $a\in S,$ there exists a unique $b\in\left\{0,1,\cdots, \frac{q-1}{2}\right\}\subseteq \mathbb{Z}_q$ such that $b^2\equiv a\pmod{q}.$ We denote such $b,$ if it exists, by $\sqrt{a}.$ Now note that under the condition that $b\triangleq\sqrt{\Delta_1^2-4\Delta_0}$ exists, we have that $v_1,v_2$ are different elements of the set $\left\{2^{-1}\left(\Delta_1+b\right)\pmod{q},2^{-1}\left(\Delta-b\right)\pmod{q}\right\}.$ However, note that by definitions of $\Delta_0$ and $\Delta_1,$ we have $\Delta_1^2-4\Delta_0=(v_1-v_2)^2.$ Hence $b$ is the unique element in $\{(v_1-v_2)\pmod{q},(v_2-v_1)\pmod{q}\}\cap\left\{0,1,\cdots, \frac{q-1}{2}\right\}.$ This shows that provided that the square root function $\sqrt{\cdot}$ can be efficiently computed, we can compute $\{v_1,v_2\}$ efficiently.

Note that although we have learned the set $\{v_1,v_2\},$ which we denote by $\theta_1$ and $\theta_2,$ unless $\theta_1=\theta_2,$ we need to determine whether $(v_1,v_2)=(\theta_1,\theta_2)$ or $(v_1,v_2)=(\theta_2,\theta_1).$ Here with the help of $s^{(3)}(\balpha)$ we prove that given $\{\theta_1,\theta_2\}$ and $k_1,k_2,$ the run numbers where the two deleted entries are from, we can uniquely recover $\balpha.$ Note that if $k_1=k_2,$ since $\theta_1\neq \theta_2,$ there is a unique way to place $\theta_1$ and $\theta_2$ in the $k_1$-th run to satisfy the run definition. Hence, it remains to show that with the possible help of $s^{(3)}(\balpha),$ we can uniquely recover $\balpha$ from $\{\theta_1,\theta_2\}$ and $k_1,k_2$ given that $k_1\neq k_2$ and $\theta_1\neq \theta_2.$ Without loss of generality, when seen as positive integers, assume that $\theta_1<\theta_2$ and $k_1<k_2.$ Note that here $\theta_2-\theta_1<q$ and $k_2-k_1<n.$ Hence $0<(\theta_2-\theta_1)(k_2-k_1)<nq.$ We divide the case depending whether the number of runs of $\balpha$ and $\balpha^\prime$ are different.

\begin{itemize}
\item Case $1:$ Both the $k_1$-th run and $k_2$-th run also exist in $\balpha^\prime.$ This implies that after the re-insertion, both $k_1$-th and $k_2$-th run contain elements from $\balpha^\prime.$ For $i=1,2,$ let $\balpha^{(i)}$ be the word obtained from $\balpha^\prime$ by adding $\theta_1$ to the $k_i$-th run of $\balpha^\prime$ and $\theta$ to the $k_{3-i}$-th run of $\balpha^\prime.$ Then $s^{(3)}(\balpha^{(1)})\equiv B+k_1\theta_1+k_2\theta_2\pmod{nq}$ while $s^{(3)}(\balpha^{(2)})\equiv B+k_1\theta_2+k_1\theta_1\pmod{nq}.$ Hence $s^{(3)}(\balpha^{(1)})\equiv s^{(3)}(\balpha^{(2)})\pmod{nq}$ if and only if $(k_2-k_1)(\theta_2-\theta_1)\equiv 0\pmod{nq}.$ However, this is impossible since $0<(k_2-k_1)(\theta_2-\theta_1)<nq.$ Hence in this case, there can only be one of the possibilities that can have $s^{(3)}(\balpha^{(i)})=s^{(3)}(\balpha).$
\item Case $2:$ Both the $k_1$-th run and $k_2$-th run did not exist in $\balpha^\prime.$ Note that in this case, after the re-insertion, both $k_1$-th and $k_2$-th run are of length $1.$ Note that because of this, the actual position of the deleted elements are determined, say at $i_1<i_2.$ Hence we have that the first symbol is inserted between $\alpha^\prime_{i_1-1}$ and $\alpha^\prime_{i_1}$ while the second symbol is inserted between $\alpha^\prime_{i_2-2}$ and $\alpha^\prime_{i_2-1}.$ Note that if at least one of $\balpha^{(1)}$ or $\balpha^{(2)}$ does not satisfy the required value of $\mathbf{a},$ we can directly eliminate such value from the possible value of $\balpha.$ So we assume that both $\mathbf{a}=\varphi(\balpha^{(1)})=\varphi(\balpha^{(2)}).$
Then $s^{(3)}(\balpha^{(i)})\equiv B+k_1\theta_1+2 \sum_{j=i_1}^{i_2-2}\alpha_i'+k_2\theta_{3-i}+4\sum_{j=i_2-1}^{n-2}\alpha_i'\pmod{nq}$ for $i=1,2.$ So we again have that $s^{(3)}(\balpha^{(1)})\equiv s^{(3)}(\balpha^{(2)})\pmod{nq}$ if and only if $(k_2-k_1)(\theta_2-\theta_1)\equiv 0\pmod{nq} $ which is again impossible since $0<(k_2-k_1)(\theta_2-\theta_1)<nq.$
\item Case $3:$ Suppose that the $k_1$-th run did not exist in $\balpha^\prime$ while the $k_2$-th run did. We again note that if there exists $i$ such that $\varphi(\balpha^{(i)})\neq \mathbf{a},$ we can directly eliminate such possibility for $\balpha.$ So we consider the case when $\mathbf{a}=\varphi(\balpha^{(1)})=\varphi(\balpha^{(2)}).$ Using a similar argument as before, $s^{(3)}(\balpha^{(i)})\equiv B+k_1\theta_1+2\sum_{j= i_1}^{n-2}\alpha_j^\prime + k_2\theta_{3-i}\pmod{nq}.$ So $s^{(3)}(\balpha^{(1)})\equiv s^{(3)}(\balpha^{(2)})\pmod{nq}$ if and only if $(k_2-k_1)(\theta_2-\theta_1)\equiv 0\pmod{nq}$ which cannot happen by our assumption of $k_1,k_2,\theta_1$ and $\theta_2.$
\item Case $4:$ Lastly, suppose that $k_1$-th run existed in $\balpha^\prime$ while the $k_2$-th run did not. Using a similar argument as before, $s^{(3)}(\balpha^{(i)})\equiv B+k_1\theta_1+ k_2\theta_{3-i}+2\sum_{j= i_2-1}^{n-2}\alpha_j^\prime\pmod{nq}.$ So $s^{(3)}(\balpha^{(1)})\equiv s^{(3)}(\balpha^{(2)})\pmod{nq}$ if and only if $(k_2-k_1)(\theta_2-\theta_1)\equiv 0\pmod{nq}$ which cannot happen by our assumption of $k_1,k_2,\theta_1$ and $\theta_2$ which concludes our proof.
\end{itemize}

%
%
%
%
%
%
%

\end{proof}

It is easy to see that by definition, the sketch we use, which consists of $\bar{s},s^{(1)}(\balpha)\pmod{q},s^{(2)}(\balpha)\pmod{q}$ and $s^{(3)}(\balpha)\pmod{nq}$ can be computed in $T_{\bar{s},E}(n)+O(n)$ where $T_{\bar{s},E}(n)$ is the complexity of calculating $\bar{s}.$ Hence our sketch is efficiently computable provided that $\bar{s}$ is efficiently computable.

Next, we consider the efficiency of the decoding algorithm implied by our proof above. After the decoding of $\mathbf{a}$ from $\mathbf{a}',$ which uses the decoding algorithm of the underlying binary sketch, the values of $v_1$ and $v_2$ are determined by solving the System \eqref{eq:findvi}, whose complexity is asymptotically equal to the complexity of calculating the square root function over $\mathbb{Z}_q.$ Hence, if a square root algorithm, for example, the Tonelli-Shanks algorithm is used \cite{Ton91,Sha73}, its complexity is $O(\log^2 q).$ Having $k_1,k_2,v_1$ and $v_2,$ the last step of the decoding algorithm is to identify whether to insert $v_1$ or $v_2$ at the $k_1$-th or $k_2$-th run. This can be done by considering both possibilities and calculate the value of $s^{(3)}$ for both possibilities. Hence, it is easy to see that such calculation can be done in time linear with respect to $n.$ So the overall decoding algorithm takes time $T_{\bar{s},D}(n)+O(\log^2 q + n)$ where $T_{\bar{s},D}(n)$ is the decoding complexity of the underlying binary sketch.

 This results in our the following efficiently computable $q$-ary sketch function.
\begin{theorem}\label{thm:gencons}
Let $\bar{s}:\Sigma_2^n\rightarrow \Sigma_2^{\ell(n)}$ be an efficiently computable sketch function such that for any $\mathbf{a}\in \Sigma_2^n,$ given $\bar{s}(\mathbf{a})$ and any subsequence $\mathbf{a}'\in \Sigma_2^{n-2}$ of $\mathbf{a},$ the original $\mathbf{a}$ can be uniquely and efficiently recovered. We further assume that $\bar{s}$ can be computed in time $T_{\bar{s},E}(n)$ while the error correction takes $T_{\bar{s},D}(n).$ Then for any positive integer $q>2,$ there exists an explicit sketch $s:\Sigma_q^n\rightarrow \Sigma_2^{\log n+3\log q}$ that can be efficiently computed such that for any $\balpha\in \Sigma_q^n,$ given $\bar{s}(\varphi(\balpha)), s(\balpha)$ and any subsequence $\balpha^\prime\in \Sigma_q^{n-2}$ of $\balpha,$ the original $\balpha$ can be uniquely and efficiently recovered. In other words, the total sketch size is $\ell(n)+\log n+3\log q$-bits. Furthermore, the sketch calculation and the error correction complexity are $T_{\bar{s},E}(n)+O(n)$ and $T_{\bar{s},D}(n)+O(\log^2 q+n)$ respectively.
\end{theorem}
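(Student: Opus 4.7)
The plan is to construct $s$ as the natural concatenation of the three $q$-ary auxiliary sketches built in Lemma~\ref{lem:gencons} and then read off the claimed parameters from that construction. Concretely, I would define
\[
s(\balpha) \;\eqdef\; \bigl(\,s^{(1)}(\balpha)\bmod q,\; s^{(2)}(\balpha)\bmod q,\; s^{(3)}(\balpha)\bmod nq\,\bigr),
\]
interpreted as a binary string by writing each modular residue in its standard binary expansion. The first two components each fit in $\lceil\log q\rceil$ bits and the third in $\lceil\log(nq)\rceil=\lceil\log n+\log q\rceil$ bits, so the total length of $s(\balpha)$ is $\log n+3\log q+O(1)$ bits, matching the claim in the statement.

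For the recovery guarantee, I would simply appeal to Lemma~\ref{lem:gencons}: given $\bar{s}(\varphi(\balpha))=\tilde{s}(\balpha)$, the three modular residues that make up $s(\balpha)$, and any length-$(n-2)$ subsequence $\balpha'$ of $\balpha$, the lemma already shows that $\balpha$ is uniquely determined. So no new combinatorial argument is needed here; the theorem is essentially a packaging of that lemma into the sketch-function formalism of Lemma~\ref{lem:Sketch-Code}.

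For efficiency, I would argue encoding and decoding separately. On the encoding side, $s^{(1)}, s^{(2)}$ are linear-time sums, and $s^{(3)}$ requires first computing $\varphi(\balpha)$, identifying the runs and run sums (all in $O(n)$ by Definition~\ref{def:qto2}), then a single linear pass to accumulate $\sum_i i\beta_i$; the modular reductions are $O(\log n+\log q)$. The binary sketch $\bar{s}$ on input $\varphi(\balpha)\in\Sigma_2^n$ costs $T_{\bar{s},E}(n)$ by hypothesis, giving total encoding time $T_{\bar{s},E}(n)+O(n)$. On the decoding side, I would retrace the three stages of the proof of Lemma~\ref{lem:gencons}: (i) compute $\varphi(\balpha')$ in $O(n)$ and invoke $\bar{s}$'s decoder in time $T_{\bar{s},D}(n)$ to recover $\mathbf{a}=\varphi(\balpha)$ and hence the indices $k_1,k_2$; (ii) solve the quadratic system~\eqref{eq:findvi} over $\mathbb{Z}_q$ for the unordered pair $\{v_1,v_2\}$, whose dominant cost is a modular square root, handled by Tonelli--Shanks in $O(\log^2 q)$; (iii) resolve the ordering of $(v_1,v_2)$ by computing $s^{(3)}$ of each of the at most two candidate reinsertions in $O(n)$ time and comparing with the received sketch, which is justified by the four case analysis in the proof of Lemma~\ref{lem:gencons}. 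Summing these gives the stated $T_{\bar{s},D}(n)+O(\log^2 q+n)$ bound.

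The only mildly subtle point, which I would flag but not belabor, is that the three pieces of $s(\balpha)$ must be concatenated with fixed-length binary encodings so that the decoder can parse them unambiguously from the binary representation; using $\lceil\log q\rceil$ and $\lceil\log(nq)\rceil$ bits per field handles this and only contributes the $O(1)$ hidden in the sketch length. There is no serious obstacle to overcome, as all the real work has already been done in Lemma~\ref{lem:gencons} and in the complexity remarks immediately preceding the statement; the proof is essentially a two-line assembly plus a bookkeeping of bit-lengths and runtimes.
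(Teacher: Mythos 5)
Your proposal is correct and matches the paper's own treatment: the paper gives no separate proof of Theorem~\ref{thm:gencons}, instead presenting it as the direct packaging of Lemma~\ref{lem:gencons} together with the complexity discussion in the two paragraphs immediately preceding the theorem, exactly as you describe. Your remark that the concatenated residues actually occupy $\log n+3\log q+O(1)$ bits (rather than exactly $\log n+3\log q$) is a fair bookkeeping observation that the paper glosses over.
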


\begin{rem}\label{rem:genconscode}{\rm
By Lemma~\ref{lem:Sketch-Code}, such sketch implies the existence of a $q$-ary $2$-deletion code of length $N\le n+\ell(n)+\log n+3\log(q)+O(\log \log n)$ with encoding complexity $T_{\bar{s},E}(n)+O(n)$ and decoding complexity $T_{\bar{s},D}(n)+O(\log^2 q+n).$
}\end{rem}
\section{Construction Realization and Comparison with Existing Construction}\label{sec:GenConsExandComp}
In this section, we use the construction shown in Theorem \ref{thm:gencons} with some existing binary sketch functions to obtain the actual output length of the resulting construction. Then, we compare such construction with existing $q$-ary sketch functions. Furthermore, we also consider the use of our construction and its variant to construct $q$-ary sketch functions capable of list decode up to two deletions.

\subsection{Instantiation using Existing Binary Two-Deletion Sketch Function and its Comparison}

In the following we consider the length of the resulting sketch function based on various existing two-deletion codes proposed in \cite{GS19,SRB18,SPCH22}. We note that the code and sketch described in \cite{SRB18} has $O(n)$ encoding and decoding algorithms, the ones described in \cite{SPCH22} has $O(n^5)$ and $O(n^3)$ encoding and decoding algorithms while the one proposed in \cite{GS19} does not come with the encoding and decoding complexities. Hence, we no longer have the efficiency assumption for the construction based on \cite{GS19}.

\begin{cor}\label{cor:GenConsRed}
\begin{itemize}
\item Let $\mathcal{C}_0^{(1)}\subseteq\Sigma_2^n$ be the binary two-deletion correcting code proposed in \cite{GS19} which has a sketch of size $8\log n+O(\log \log n)$ bits. Then we can compute a sketch function $s_0^{(1)}$ with total output size $9\log n+O(\log \log n)+3\log q$ bits that can be used to correct up to $2$ deletions.

\item Let $\mathcal{C}_0^{(2)}\subseteq \Sigma_2^n$ be the binary two-deletion correcting codes proposed in \cite[Construction $2$]{SPCH22} which has a sketch of size $7\log n+o(\log n)$ bits. \\
Then we can efficiently compute a sketch function $s_0^{(2)}$ with total output size $8\log n + 3\log q + o(\log n)$ bits that can be used to efficiently correct up to $2$ deletions. Moreover, the sketch can be computed in $O(n^5)$ while the error correction can be done in $O(\log^2 q + n^3).$

\item Let $\mathcal{C}_0^{(3)}\subseteq \Sigma_2^n$ be the binary two-deletion correcting codes proposed in \cite{SRB18} which has a sketch of size $7\log n+6$ bits. \\
Then we can efficiently compute a sketch function $s_0^{(3)}$ with total output size $8\log n + 3\log q + \log 6$ bits that can be used to efficiently correct up to $2$ deletions. Moreover, the sketch can be computed in $O(n)$ while the error correction can be done in $O(\log^2 q + n).$
\end{itemize}
\end{cor}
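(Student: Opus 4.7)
The plan is to apply Theorem~\ref{thm:gencons} directly to each of the three binary two-deletion sketches in turn. Concretely, for each item I would identify the binary sketch $\bar{s}$, read off its output length $\ell(n)$ and (where available) its encoding complexity $T_{\bar{s},E}(n)$ and decoding complexity $T_{\bar{s},D}(n)$, and then substitute these into the bounds supplied by Theorem~\ref{thm:gencons}, which add $\log n + 3\log q$ to the output length and $O(n)$ and $O(\log^2 q + n)$ to the encoding and decoding complexities, respectively.

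For the first item I would take $\bar{s}$ to be the sketch of~\cite{GS19} with $\ell(n) = 8\log n + O(\log\log n)$. Theorem~\ref{thm:gencons} then yields a $q$-ary sketch of total output length $9\log n + O(\log\log n) + 3\log q$, which is precisely the stated bound. Since~\cite{GS19} does not supply explicit encoding or decoding complexity bounds, no complexity claim is made here.

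For the second item I would take $\bar{s}$ to be the sketch of~\cite[Construction~2]{SPCH22}, for which $\ell(n) = 7\log n + o(\log n)$, $T_{\bar{s},E}(n) = O(n^5)$ and $T_{\bar{s},D}(n) = O(n^3)$. Substituting into Theorem~\ref{thm:gencons} gives a composed sketch of length $8\log n + 3\log q + o(\log n)$, with encoding complexity $O(n^5) + O(n) = O(n^5)$ and decoding complexity $O(n^3) + O(\log^2 q + n) = O(\log^2 q + n^3)$, matching the claim. For the third item I would similarly take $\bar{s}$ to be the sketch of~\cite{SRB18} with $\ell(n) = 7\log n + 6$ and $T_{\bar{s},E}(n) = T_{\bar{s},D}(n) = O(n)$; the same substitution gives a sketch of size $8\log n + 3\log q + \log 6$ with encoding complexity $O(n)$ and decoding complexity $O(\log^2 q + n)$.

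There is no real obstacle to the proof: it is a mechanical instantiation of Theorem~\ref{thm:gencons}. The only thing I would be careful about is to cross-check, for each reference, that the sketch really can be used as a black-box satisfying the hypotheses of Theorem~\ref{thm:gencons} (i.e., that from $\bar{s}(\mathbf{a})$ together with any length-$(n-2)$ subsequence of $\mathbf{a}$ one can uniquely recover $\mathbf{a}$), and that the quoted complexities $T_{\bar{s},E}(n),T_{\bar{s},D}(n)$ are stated as claimed in the respective sources. Once those sanity checks are done, the three bullet points follow by direct substitution.
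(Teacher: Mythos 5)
Your proposal is correct and is exactly how the paper obtains this corollary: the paper states it without a separate proof, treating it as a mechanical instantiation of Theorem~\ref{thm:gencons} with the sketch lengths and the encoding/decoding complexities ($O(n)$ for \cite{SRB18}, $O(n^5)$/$O(n^3)$ for \cite{SPCH22}, none stated for \cite{GS19}) quoted just before the corollary. The only nit is the additive constant in the third bullet ($7\log n+6$ plus $\log n+3\log q$ gives $+6$, not $+\log 6$), but that discrepancy is in the paper's own statement and is absorbed into an $O(1)$ term in any case.
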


Note that the second point of Corollary \ref{cor:GenConsRed} yields an efficiently encodable and decodable $q$-ary $2$-deletion code, which will be summarized in the following theorem.

\begin{theorem}\label{MT:UD1}
There is an explicit $q$-ary $2$-deletion correcting code $\mathcal{C}\subseteq \Sigma_q^N$ of size $q^n$ with $N\leq n+ 8\log n + 3\log q + O(\log \log n)$ with redundancy $8\log n+3\log q + O(\log \log n)$ with encoding complexity $O(n)$ and decoding complexity $O(n+\log^2 q).$
\end{theorem}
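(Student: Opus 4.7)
The plan is to recognize Theorem~\ref{MT:UD1} as a direct consequence of combining Theorem~\ref{thm:gencons} with a suitable binary $2$-deletion sketch and then passing from a sketch to an actual code via Lemma~\ref{lem:Sketch-Code}. The key observation is that the numerical parameters (redundancy $8\log n+3\log q+O(\log\log n)$, encoding $O(n)$, decoding $O(n+\log^2 q)$) match exactly the third bullet of Corollary~\ref{cor:GenConsRed}, which is instantiated from the binary sketch of Sima--Raviv--Bruck~\cite{SRB18}. So essentially all the work has already been carried out; the task is to assemble the pieces and argue that the complexity bounds survive the conversion from sketch to code.

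First I would invoke the binary $2$-deletion sketch $\bar s$ of~\cite{SRB18}, which has output length $7\log n+6$ bits and admits $O(n)$-time encoding and $O(n)$-time error correction. Plugging this into Theorem~\ref{thm:gencons} yields an explicit $q$-ary sketch $s$ with total output length
\[
(7\log n + 6) + \log n + 3\log q \;=\; 8\log n + 3\log q + O(1)
\]
bits that can correct any $2$ deletions. The sketch-computation time is $T_{\bar s,E}(n)+O(n)=O(n)$, and the sketch-level decoding time is $T_{\bar s,D}(n)+O(\log^2 q + n) = O(n+\log^2 q)$.

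Next, I would feed this $q$-ary sketch into Lemma~\ref{lem:Sketch-Code} with $c_1=8$ and $c_2=3$. The lemma converts any such efficiently computable sketch into an explicit $q$-ary $2$-deletion correcting code $\mathcal C\subseteq \Sigma_q^N$ of size $q^n$, with
\[
N \le n + 8\log n + 3\log q + O(\log\log n),
\]
which is precisely the claimed redundancy. The additional $O(\log\log n)$ bits come from appending a short binary sketch of the $q$-ary sketch, repeated three times, so that the decoder can first recover the $q$-ary sketch and then the message.

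Finally I would check that the encoding and decoding complexities of the final code match the ones in the theorem statement. Encoding consists of (i) computing $s$ on the message in $O(n)$, (ii) computing a binary sketch on the $O(\log n+\log q)$-bit string $s(\mathbf{m})$, which is polylogarithmic and thus negligible compared to $n$, and (iii) concatenation, giving total time $O(n)$. Decoding consists of a majority vote on the repetition code, decoding the inner binary sketch on an $O(\log n+\log q)$-bit input (polylogarithmic), and finally running the $q$-ary sketch decoder of Theorem~\ref{thm:gencons} in time $O(n+\log^2 q)$, yielding overall $O(n+\log^2 q)$. The only real bookkeeping step is checking that the polylogarithmic overheads coming from the inner sketches are dominated by $O(n)$ and $O(n+\log^2 q)$ respectively, which is immediate. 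Since every constituent step has been established earlier in the paper, there is no substantive obstacle — the proof is essentially a one-line composition.
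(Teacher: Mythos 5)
Your proof is correct and follows essentially the same route as the paper: instantiate Theorem~\ref{thm:gencons} with the Sima--Raviv--Bruck sketch of \cite{SRB18} (the third bullet of Corollary~\ref{cor:GenConsRed}) and convert the resulting $q$-ary sketch into a code via Lemma~\ref{lem:Sketch-Code}, with the polylogarithmic inner-sketch overhead absorbed into the stated bounds. The only discrepancy is that the paper's text attributes the theorem to the \emph{second} point of Corollary~\ref{cor:GenConsRed} (the \cite{SPCH22}-based sketch, which would give $O(n^5)$ encoding and $O(n^3)$ decoding), whereas the claimed $O(n)$ encoding and $O(n+\log^2 q)$ decoding complexities only match the third point, so your choice is the consistent one.
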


 \begin{rem}\label{rem:ConsComp}{\rm
We note that the construction discussed above cannot be directly used with the binary sketch proposed in \cite{GH21}. This is because such construction requires the binary string to be regular, which is not a requirement that can be expressed as a sketch. So based on Corollary \ref{cor:GenConsRed}, the smallest sketch from our initial construction requires output length of $8\log n + 3\log q + 6$ bits.

Note that when $q>2$ is even and constant with respect to $n,$ a construction of a $q$-ary sketch function that can be used to correct up to $2$ deletion with output $5\log n+(16\log q+10)\log \log n$ bits was proposed in \cite{SC22}. It is easy to see that the regime of $q$ considered in our work is different where we require $q$ to be an odd prime. Furthermore, in general, our best sketch requires larger output compared to the one provided in \cite{SC22}. However, our construction is still efficient even when $q$ is no longer constant with respect to $n.$ Furthermore, when $q$ is sufficiently large, the output length of our proposed sketch function is smaller compared to the extension of the construction provided in \cite{SC22}.

A similar construction of $q$-ary $2$-deletion sketch function was proposed in \cite{Zhou20}. Given an underlying binary sketch capable to correct $2$ deletions with output length $\ell$ bits, their proposed construction provides a $q$-ary sketch capable to correct $2$ deletions with output length $\ell+3\log n+O_q(\log \log n)$ bits.  It is easy to see that our resulting sketch has a smaller output length of $\ell+\log n + 3\log q$ bits.

Compared to the $q$-ary $2$-deletion correcting code implied by \cite[Construction 3]{SPCH22}, we note that it has the same asymptotic redundancy. However, our resulting construction has a better encoding and decoding complexity, which is $O(n)$ encoding and $O(n)$ decoding complexities instead of $O(n^5)$ encoding and $O(n^3)$ decoding complexities.
}
 \end{rem}

 \subsection{Construction of Sketch Function for List Decoding from Lemma~\ref{lem:gencons} and its Simplification}
Note that a $q$-ary sketch function that can be used to list decode against $2$-deletion can be constructed using a similar approach as Lemma~\ref{lem:gencons}. More specifically, if the underlying binary sketch function $s_0$ is list-decodable against $2$-deletion, our resulting sketch can also be used to list decode against $2$-deletion with the same list size. Alternatively, we note that without the sketch $s^{(3)}(\balpha),$ the decoding algorithm discussed in Lemma~\ref{lem:gencons} yields a list of size at most $2.$ Hence, if the underlying binary sketch function $s_0$ can be used to correct up to $2$ deletions, this gives a $q$-ary sketch that can list decode against $2$-deletion with list size of at most $2.$ On the other hand, if the underlying binary sketch $s_0$ can be used to list-decode against $2$-deletion errors with list size $\mathcal{L},$ our construction produces a $q$-ary sketch that can list decode against $2$-deletion with list size of at most $2\mathcal{L}.$ We summarize this discussion in the following Corollary.

\begin{cor}\label{cor:GenConsLD}
Let $\bar{s}_U:\Sigma_2^n\rightarrow \Sigma_2^{\ell(n)}$ be a sketch function that can be used to  decode up to $2$-deletion errors. Furthermore, let $\bar{s}_L:\Sigma_2^n\rightarrow \Sigma_2^{\ell'(n)}$ be a binary sketch function that can be used to list decode up to $2$-deletions with list size $\mathcal{L}$ for some positive integer $\mathcal{L}.$ We assume that the complexity of computing $\bar{s}_U, \bar{s}_L,$ the error correction using $\bar{s}_U$ and the list decoding using $\bar{s}_L$ are $T_{\bar{s}_U,E}(n),T_{\bar{s}_U,E}(n), T_{\bar{s}_U,D}(n)$ and $T_{\bar{s}_L,D}(n)$ respectively. Then for any $q>2,$ there exists:
\begin{enumerate}
\item A $q$-ary sketch function $s$ with output length $\ell(n)+2\log q$ bits that can be used to list-decode $2$ deletions with list size of at most $2.$ Furthermore, the complexities of the sketch computation and the list decoding algorithm are $T_{\bar{s}_U,E}(n)+O(n)$ and $T_{\bar{s}_U,D}(n)+O(\log^2 q)$ respectively.
\item A $q$-ary sketch function $s$ with output length $\ell'(n)+3\log q+\log n$ bits that can be used to list-decode $2$ deletions with list size of at most $2.$ Furthermore, the complexities of the sketch computation and the list decoding algorithm are $T_{\bar{s}_L,E}(n)+O(n)$ and $T_{\bar{s}_L,D}(n)+O(\log^2 q+n)$ respectively.
\item  A $q$-ary sketch function $s$ with output length $\ell'(n)+2\log q$ bits that can be used to list-decode $2$ deletions with list size of at most $4.$ Furthermore, the complexities of the sketch computation and the list decoding algorithm are $T_{\bar{s}_L,E}(n)+O(n)$ and $T_{\bar{s}_L,D}(n)+O(\log^2 q)$ respectively.
\end{enumerate}
\end{cor}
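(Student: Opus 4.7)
The plan is to obtain all three parts of the corollary as simple modifications of the argument of Lemma~\ref{lem:gencons}. In every case, the $q$-ary sketch of a codeword $\balpha \in \Sigma_q^n$ is the concatenation of an underlying binary sketch applied to $\varphi_n(\balpha)$ with some subset of the $q$-ary residues $s^{(1)}(\balpha)\!\!\pmod q$, $s^{(2)}(\balpha)\!\!\pmod q$ and $s^{(3)}(\balpha)\!\!\pmod{nq}$. On the decoder side, given $\balpha' \in \Sigma_q^{n-2}$, the decoder first forms $\mathbf{a}' = \varphi_{n-2}(\balpha')$, which by Corollary~\ref{cor:varphigen} is a length-$(n-2)$ subsequence of $\mathbf{a}=\varphi_n(\balpha)$, and then invokes the underlying binary (unique- or list-) decoder to obtain a list of candidate strings $\mathbf{a}$, each of which pins down a pair $(k_1,k_2)$ of run indices from which the two symbols were deleted.

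For part (1), I would use $\bar{s}_U(\varphi_n(\balpha))$ together with only $s^{(1)}(\balpha)\!\!\pmod q$ and $s^{(2)}(\balpha)\!\!\pmod q$. The unique-decoding guarantee on $\bar{s}_U$ recovers $\mathbf{a}$ unambiguously, hence also $k_1$ and $k_2$. Solving the quadratic system over $\mathbb{Z}_q$ exactly as in the proof of Lemma~\ref{lem:gencons} then determines the unordered pair $\{v_1,v_2\}$ of deleted symbols. The only ambiguity remaining is whether $v_1$ lies in the $k_1$-th or the $k_2$-th run, and this only bites when $k_1 \neq k_2$ and $v_1 \neq v_2$; otherwise the answer is unique. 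This yields a list of size at most $2$, and the complexity and output-length bookkeeping follow by simply removing the $O(n)$ term used in Lemma~\ref{lem:gencons} for $s^{(3)}$.

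For parts (2) and (3), I would replace $\bar{s}_U$ by $\bar{s}_L$ and rerun the same analysis. In part (2) the full $q$-ary sketch (including $s^{(3)}$) is kept; the $\mathcal{L}$ candidates produced by the list decoder of $\bar{s}_L$ each yield a unique $\balpha$ via the argument of Lemma~\ref{lem:gencons}, so at most $\mathcal{L}$ candidates remain, with output length $\ell'(n) + 3\log q + \log n$. In part (3) the sketch $s^{(3)}$ is dropped; combining the $\mathcal{L}$-fold branching on $\mathbf{a}$ with the doubling from part (1) produces at most $2\mathcal{L}$ candidates, with output length $\ell'(n)+2\log q$. The complexities in each case are obtained by summing the evaluation/decoding times of the underlying binary primitive with the $O(n)$ cost of $\varphi$ and the run sums and the $O(\log^2 q)$ cost of the square-root step.

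Since each claim is a direct adaptation of a proof already in hand, no conceptual obstacle is expected. The only item that needs a careful sanity check is that the degenerate subcases $k_1 = k_2$ or $v_1 = v_2$ in part (1) really do collapse the two orderings into a single codeword, so that the list-of-two bound is valid; this is already implicit in the corresponding step of the proof of Lemma~\ref{lem:gencons} and only needs to be re-read with the list-decoding perspective in mind.
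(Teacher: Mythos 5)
Your proposal is correct and follows essentially the same route as the paper, which itself only justifies this corollary by the discussion paragraph preceding it: reuse the decoding argument of Lemma~\ref{lem:gencons}, dropping $s^{(3)}$ to trade a factor of $2$ in list size for $\log n + \log q$ bits of sketch, and replacing the unique binary decoder by the list decoder where appropriate. Your bounds of $\mathcal{L}$ and $2\mathcal{L}$ for parts (2) and (3) are exactly what the argument yields (the paper's stated ``$2$'' and ``$4$'' implicitly take $\mathcal{L}=2$), and your sanity check on the degenerate cases $k_1=k_2$ or $v_1=v_2$ is the right one.
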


Here we provide instantiations based on existing constructions of binary deletion sketches in \cite{SRB18,GH21}. We note that the complexities of computing the list-decodable binary sketch proposed in \cite{GH21} and the list-decoding algorithm are both $O(n).$

\begin{cor}\label{cor:GenConsLDRed}
\begin{itemize}
\item Let $s_U$ be the binary sketch function proposed in \cite{SRB18} with output length $7\log n+\log 6$ bits. Then the $q$-ary sketch function $s^{(0)}$ generated following the construction discussed in the first point of Corollary \ref{cor:GenConsLD} can be used to efficiently list decode against any $2$ deletions with list size of at most $2$ and sketch length $7\log n + 2\log q + \log 6.$ Furthermore, the complexities of the sketch computation and the list decoding algorithm are $O(n)$ and $O(\log^2 q+n)$ respectively.
\item Let $s_L$ be a binary sketch function proposed in \cite{GH21} that can be used to efficiently list decode any two deletions with list of size at most $2$ and sketch length $3\log n+O(\log \log n).$ Then the sketch $s^{(1)}$ obtained by following the construction discussed in the second point of Corollary \ref{cor:GenConsLD} can be used to efficiently list decode any $2$ deletions with list size of at most $2$ and sketch length $4\log n + 3\log q +O(\log \log n).$ The complexities of the sketch computation and the list decoding algorithm are $O(n)$ and $O(\log^2 q+n)$ respectively.

Furthermore, the sketch $s^{(2)}$ obtained by following the construction discussed in the last point of Corollary \ref{cor:GenConsLD} can be used to efficiently list decode any $2$ deletions with list size of at most $4$ and output length $3\log n + 2\log q + O(\log \log n).$ The complexities of the sketch computation and the list decoding algorithm are $O(n)$ and $O(\log^2 q+n)$ respectively.
\end{itemize}
\end{cor}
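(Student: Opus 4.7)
The plan is to view Corollary~\ref{cor:GenConsLDRed} as a mechanical instantiation of Corollary~\ref{cor:GenConsLD}, obtained by plugging in the parameters of the two specific underlying binary sketches from \cite{SRB18} and \cite{GH21}. No new technique is required; the work consists entirely of parameter substitution and verifying that the efficiency hypotheses are in force.

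First I would handle the first bullet. Let $\bar{s}_U$ denote the VT-style binary $2$-deletion sketch of \cite{SRB18}, whose output length is $\ell(n) = 7\log n + \log 6$ bits and whose sketch computation and decoding (from a length-$(n-2)$ subsequence) both run in linear time, i.e.\ $T_{\bar{s}_U,E}(n) = T_{\bar{s}_U,D}(n) = O(n)$. Substituting these numbers into the first item of Corollary~\ref{cor:GenConsLD} yields a $q$-ary sketch of output length $\ell(n) + 2\log q = 7\log n + 2\log q + \log 6$ bits with list size at most $2$, sketch complexity $T_{\bar{s}_U,E}(n) + O(n) = O(n)$, and list-decoding complexity $T_{\bar{s}_U,D}(n) + O(\log^2 q) = O(n + \log^2 q)$, which is exactly the first claim.

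Next I would handle the second bullet. Take $\bar{s}_L$ to be the binary list-decoding sketch of \cite{GH21}, whose output length is $\ell'(n) = 3\log n + O(\log\log n)$ bits and whose list size is $\mathcal{L} = 2$; the paragraph immediately preceding Corollary~\ref{cor:GenConsLDRed} records that both the sketch computation and the list-decoding procedure run in $O(n)$ time. Feeding $\bar{s}_L$ into the second item of Corollary~\ref{cor:GenConsLD} produces a $q$-ary sketch of length $\ell'(n) + 3\log q + \log n = 4\log n + 3\log q + O(\log\log n)$ with list size at most $\mathcal{L} = 2$, sketch complexity $O(n)$, and list-decoding complexity $O(n + \log^2 q)$. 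Feeding the same $\bar{s}_L$ into the third item (the variant that discards the $s^{(3)}$-sketch in exchange for doubling the list size) gives output length $\ell'(n) + 2\log q = 3\log n + 2\log q + O(\log\log n)$ with list size at most $2\mathcal{L} = 4$ and identical complexity bounds. This exhausts both parts of the second bullet.

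There is no real technical obstacle: the only things to verify are (i) that the efficiency hypotheses of Corollary~\ref{cor:GenConsLD}, namely linear-time binary sketch computation and linear-time (list) decoding, are satisfied by each of \cite{SRB18} and \cite{GH21}, which is explicit in those papers and already recorded in the surrounding text, and (ii) that the additive contributions in $\log n$, $\log q$, and $\log\log n$ combine as claimed. The substitution is otherwise entirely routine.
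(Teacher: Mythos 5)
Your proposal is correct and matches the paper's (implicit) argument: the paper offers no separate proof for this corollary, treating it exactly as you do, namely as a direct parameter substitution into the three items of Corollary~\ref{cor:GenConsLD} using the sketch lengths and the $O(n)$ encoding/decoding complexities of \cite{SRB18} and \cite{GH21} recorded in the surrounding text. The arithmetic on the output lengths, list sizes, and complexity bounds all checks out.
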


\begin{rem}\label{rem:GenConsLDComp}{\rm
The work in \cite{Zhou20} also proposed a $q$-ary sketch function that can be used to list decode any $2$ deletions with list size of at most $2$ and sketch length $6\log n+O_q(\log \log n)$ bits. Recall that our construction provides a $q$-ary sketch that can be used to list decode against $2$ deletions with list size of at most $2$ with sketch length $4\log n + 3\log q  + O(\log \log n)$ bits.  It is then easy to see that as long as $q^2<n^3,$ our construction provides a smaller sketch.
}\end{rem}

The sketches constructed in Corollary \ref{cor:GenConsLDRed} yields a $q$-ary deletion code that can be efficiently encoded and list decoded against any $2$-deletions. Such result is summarized in the following theorem.

\begin{theorem}\label{MT:LD1}
There is an explicit $q$-ary deletion correcting code $\mathcal{C}\subseteq \Sigma_q^N$ of size $q^n$ with $N\leq n+ 4\log n + 3\log q + O(\log \log n)$ with redundancy $4\log n+3\log q + O(\log \log n)$ with encoding complexity $O(n)$ which is list decodable against any $2$-deletions with list size $2$ and decoding complexity $O(n+\log^2 q).$

Furthermore, there is an explicit $q$-ary deletion correcting code $\mathcal{C}\subseteq \Sigma_q^N$ of size $q^n$ with $N\leq n+ 3\log n + 2\log q + O(\log \log n)$ with redundancy $3\log n+2\log q + O(\log \log n)$ with encoding complexity $O(n)$ which is list decodable against any $2$-deletions with list size $4$ and decoding complexity $O(n+\log^2 q).$
\end{theorem}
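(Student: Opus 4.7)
Theorem~\ref{MT:LD1} is a direct consequence of combining the two $q$-ary list-decoding sketches $s^{(1)}$ and $s^{(2)}$ built in the second bullet of Corollary~\ref{cor:GenConsLDRed} with the sketch-to-code transformation of Lemma~\ref{lem:Sketch-Code}. The crucial observation that makes Lemma~\ref{lem:Sketch-Code} transfer to the list-decoding setting is that, in the construction sketched there, only the very last step---recovering the message from the recovered sketch value---depends on the sketch being unique-decoding; the two intermediate steps of recovering the sketch value itself (via the outer binary repetition and an inner binary $2$-deletion decoder) are both unique, regardless of whether the $q$-ary sketch is unique- or list-decoding.

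The plan for the first statement is as follows. Given $\mathbf{m}\in\Sigma_q^n$, I would compute $s^{(1)}(\mathbf{m})$, of length $L_1=4\log n+3\log q+O(\log\log n)$ bits, pack it into $\lceil L_1/\log q\rceil$ $q$-ary symbols, and append it to $\mathbf{m}$; I would then compute a short binary $2$-deletion sketch $\bar{s}$ of this $q$-ary suffix (output length $O(\log L_1)=O(\log\log n)$ via \cite{SRB18}) and append three repetitions of $\bar{s}$. This yields a codeword of length $N\leq n+4\log n+3\log q+O(\log\log n)$, matching the claimed redundancy. To list-decode a received word $\mathbf{y}$ with at most two deletions, I would first uniquely recover $\bar{s}$ from the triple repetition (at most two of the three copies can be disturbed, so one copy remains intact), then uniquely recover $s^{(1)}(\mathbf{m})$ via the binary $2$-deletion decoder applied to the $q$-ary suffix viewed as a binary string, and finally call the $q$-ary list-decoder of $s^{(1)}$ on the remaining prefix, which by Corollary~\ref{cor:GenConsLDRed} returns a list of size at most $2$ containing $\mathbf{m}$. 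The second statement follows by repeating the argument verbatim with $s^{(2)}$ in place of $s^{(1)}$: the total length becomes $N\leq n+3\log n+2\log q+O(\log\log n)$ and the list size inherited from Corollary~\ref{cor:GenConsLDRed} is at most $4$.

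The complexity bounds aggregate the three decoding phases: the repetition recovery is $O(\log\log n)$, the binary $2$-deletion decoder runs on an input of length $O(\log n+\log q)$ and so is $\mathrm{poly}(\log n+\log q)$, and the $q$-ary list-decoder of $s^{(i)}$ runs in $O(n+\log^2 q)$ as recorded in Corollary~\ref{cor:GenConsLDRed}; encoding is $O(n)$ for the same reason. I expect no genuine technical obstacle, since the hard work is already done in Lemma~\ref{lem:gencons} and Corollary~\ref{cor:GenConsLDRed}. The only point that requires care is the bookkeeping of the outer wrapping: packing the binary sketch value into $q$-ary symbols and then protecting the packed version by a short binary $2$-deletion sketch plus triple repetition must all fit inside the $O(\log\log n)$ slack, so that the leading $3\log q$ (resp.\ $2\log q$) term is preserved exactly and the list sizes $2$ and $4$ are inherited verbatim from Corollary~\ref{cor:GenConsLDRed}.
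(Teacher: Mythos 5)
Your proposal is correct and follows essentially the same route as the paper: Theorem~\ref{MT:LD1} is obtained by feeding the two sketches of the second bullet of Corollary~\ref{cor:GenConsLDRed} into the sketch-to-code wrapper of Lemma~\ref{lem:Sketch-Code}, and your observation that only the final message-recovery step is affected by passing from unique to list decoding is exactly the (implicit) justification the paper relies on. The extra bookkeeping you supply about packing and triple repetition is consistent with the paper's own informal description of Lemma~\ref{lem:Sketch-Code}.
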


\section{Construction of $q$-ary $2$-Deletion Correcting Code from \cite{GH21}}\label{sec:ConsFromGH21}

We first note that the general construction discussed in Lemma~\ref{lem:gencons} cannot be directly used if we use the binary $2$-deletion correcting sketch function $s^{(0)}$ proposed in \cite{GH21} as the underlying binary sketch. This is because the proposed construction requires the binary string considered to be $d$-regular for some constant $d.$ Note that in the construction in \cite{GH21}, this is done by providing a specific efficient encoding $\Pi$ of messages to regular strings, which is shown to have sufficiently large domain. Note that in our case, we cannot encode $\varphi(\balpha)$ using $\Pi$ to obtain $\mathbf{a}$ since we can no longer guarantee that the two deletions in $\balpha$ correspond to two deletions in $\mathbf{a}.$ Hence, a modification to the construction in Lemma~\ref{lem:gencons} is required where we further require that $\balpha$ itself is $d$-regular. This yields the following lemma

\begin{lemma}\label{lem:GSCons}
Let $\bar{s}$ be the sketch function used in \cite[Theorem $5.9$]{GH21}. Then for any $d$-regular $\balpha\in \Sigma_q^n,$ given $s^{(1)}(\balpha)\pmod{q}, s^{(2)}(\balpha)\pmod{q}, s^{(3)}(\balpha)\pmod{nq}, \tilde{s}(\balpha)$ and any subsequence $\balpha^\prime\in \Sigma_q^{n-2}$ of $\balpha,$ the original $\balpha$ can be uniquely and efficiently recovered.
\end{lemma}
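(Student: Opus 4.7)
The plan is to adapt the proof of Lemma~\ref{lem:gencons} almost verbatim, with the key observation being that the $d$-regularity of $\balpha$ guarantees the $d$-regularity of its associated binary string, which is precisely what is needed to invoke the Guruswami--H{\aa}stad binary sketch in place of a general binary $2$-deletion sketch.

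First, I would set up as in Lemma~\ref{lem:gencons}. Let $\mathbf{a} = \varphi_n(\balpha) \in \Sigma_2^n$ and $\mathbf{a}' = \varphi_{n-2}(\balpha') \in \Sigma_2^{n-2}$. By Corollary~\ref{cor:varphigen}, $\mathbf{a}'$ is a subsequence of $\mathbf{a}$, so $\mathbf{a}'$ is obtained from $\mathbf{a}$ by deleting exactly two entries. Here is the crucial point: by the second part of Definition~\ref{def:RegStr}, the assumption that $\balpha$ is $d$-regular is exactly the statement that $\mathbf{a}$ is $d$-regular. Thus $\mathbf{a}$ lies in the domain on which $\bar{s}$ of \cite[Theorem~5.9]{GH21} is guaranteed to be a $2$-deletion sketch, and since $\tilde{s}(\balpha) = \bar{s}(\mathbf{a})$, we can invoke the decoding algorithm of \cite[Theorem~5.9]{GH21} on input $(\bar{s}(\mathbf{a}), \mathbf{a}')$ to uniquely recover $\mathbf{a}$.

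Having recovered $\mathbf{a}$, I obtain the run decomposition of $\balpha$ and, by comparing the runs of $\mathbf{a}$ with the runs of $\varphi_{n-2}(\balpha')$, identify the run indices $1 \le k_1 \le k_2 \le r_\balpha$ containing the two deleted symbols. From this point the remainder of the recovery is identical to Lemma~\ref{lem:gencons}: using $s^{(1)}(\balpha) \bmod q$ and $s^{(2)}(\balpha) \bmod q$ together with the corresponding quantities computed directly from $\balpha'$, I form the system \eqref{eq:findvi}, solve the resulting quadratic over $\mathbb{Z}_q$ (using, e.g., Tonelli--Shanks) to obtain the unordered set $\{v_1, v_2\} = \{\theta_1, \theta_2\}$, and then use $s^{(3)}(\balpha) \bmod nq$ with the same four-case analysis (based on whether each of the $k_1$-th and $k_2$-th runs already appears in $\balpha'$) to decide the correct assignment of $\theta_1, \theta_2$ to the positions $k_1, k_2$. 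In every case the disambiguation reduces to the impossibility of $(k_2 - k_1)(\theta_2 - \theta_1) \equiv 0 \pmod{nq}$, which holds because $0 < (k_2 - k_1)(\theta_2 - \theta_1) < nq$.

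The main potential obstacle lies in the first paragraph: one must be careful that ``$d$-regular'' as defined for $q$-ary strings in Definition~\ref{def:RegStr} truly coincides with ``$\varphi(\balpha)$ is $d$-regular,'' so that the hypothesis of \cite[Theorem~5.9]{GH21} is available. Once this is checked, the rest is mechanical. As for efficiency, the decoding cost is $T_{\bar{s},D}(n) + O(\log^2 q + n)$ where $T_{\bar{s},D}(n)$ is the decoding complexity of the sketch from \cite[Theorem~5.9]{GH21}; the sketch computation likewise adds only $O(n)$ on top of computing $\bar{s}$ on $\varphi(\balpha)$. Coupled with Lemma~\ref{lem:Sketch-Code} and the encoding of arbitrary messages to $d$-regular $q$-ary strings (which mirrors the binary construction of \cite{GH21} and is addressed separately to obtain Theorem~\ref{thm:main}), this yields the claimed redundancy.
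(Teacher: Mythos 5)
Your proposal is correct and matches the paper's intended argument exactly: the paper proves this lemma by observing that Definition~\ref{def:RegStr} makes $d$-regularity of $\balpha$ equivalent to $d$-regularity of $\varphi(\balpha)$, so the sketch of \cite[Theorem 5.9]{GH21} applies to $\mathbf{a}$, and the remainder of the recovery is carried over verbatim from Lemma~\ref{lem:gencons}. No gaps.
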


This implies that if an efficient encoding function to regular $q$-ary string exists, the existence of an efficiently encodable $q$-ary $2$-deletion correcting code can then be guaranteed. In the following, we consider such encoding function. For any positive integer $m,$ define the set $S_m=\{\balpha\in \Sigma_q^m:\balpha\mathrm{~is~}d\mathrm{-regular}\}.$ We further define $S_m^{(0)}=\{\balpha\in \Sigma_q^m: \nexists i\in\{1,\cdots, m-2\}:\alpha_i>\alpha_{i+1}>\alpha_{i+2}\}, F_m^{(0)}=|S_m^{(0)}|,S_m^{(1)}=\{\balpha\in \Sigma_q^m: \nexists i\in\{1,\cdots, m-2\}:\alpha_i\le\alpha_{i+1}\le\alpha_{i+2}\}$ and $F_m^{(1)}=|S_m^{(1)}|.$

\begin{lemma}\label{lem:Fm}
For any $m\ge 3$ and $q>2, F_m^{(1)}<F_m^{(0)}<(0.99 q)^m.$
\end{lemma}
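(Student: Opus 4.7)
The plan is to prove the two inequalities separately.

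For $F_m^{(1)} < F_m^{(0)}$, I would use the reflection bijection $\sigma : \Sigma_q^m \to \Sigma_q^m$ defined by $\sigma(\balpha)_i = (q-1) - \alpha_i$. Because $\sigma$ swaps the strict relations $>$ and $<$, the image $\sigma(S_m^{(0)})$ is precisely the set of length-$m$ sequences containing no strictly increasing consecutive triple. The condition defining $S_m^{(1)}$ (no weakly non-decreasing consecutive triple) is strictly stronger than the condition defining $\sigma(S_m^{(0)})$, so $S_m^{(1)} \subsetneq \sigma(S_m^{(0)})$; the all-zero sequence $(0,\ldots,0)$ witnesses the difference since it avoids every strict inequality but satisfies $0 \leq 0 \leq 0$ for $m \geq 3$. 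Taking cardinalities gives $F_m^{(1)} < |\sigma(S_m^{(0)})| = F_m^{(0)}$.

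For $F_m^{(0)} < (0.99q)^m$, I would use a sub-multiplicativity argument. Restricting any $\balpha \in S_{m_1+m_2}^{(0)}$ to its length-$m_1$ prefix and length-$m_2$ suffix yields elements of $S_{m_1}^{(0)}$ and $S_{m_2}^{(0)}$ respectively, so $F_{m_1+m_2}^{(0)} \leq F_{m_1}^{(0)} F_{m_2}^{(0)}$. Iterating with blocks of size four yields $F_m^{(0)} \leq (F_4^{(0)})^{\lfloor m/4 \rfloor} q^{m \bmod 4}$. An inclusion-exclusion on the two possible starting positions of a strictly decreasing triple inside a length-four sequence gives $F_4^{(0)} = q^4 - 2q\binom{q}{3} + \binom{q}{4}$, and a direct derivative computation shows that $F_4^{(0)}/q^4$ is strictly decreasing in $q$ on $[3,\infty)$ (reducing to the polynomial inequality $18 q^2 - 10 q - 18 > 0$), attaining its maximum $25/27$ at $q = 3$. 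Hence $F_m^{(0)}/q^m \leq (25/27)^{\lfloor m/4 \rfloor}$, and the bound $(25/27)^{\lfloor m/4 \rfloor} < 0.99^m$ reduces to $\lfloor m/4 \rfloor / m > \log(100/99)/\log(27/25) \approx 0.131$, whose minimum over $m \geq 4$ is $1/7 \approx 0.143$ (attained at $m = 7$), comfortably above the threshold. The boundary case $m = 3$ is settled directly: for $q \geq 3$, $F_3^{(0)}/q^3 = 1 - \binom{q}{3}/q^3 \leq 26/27 < 0.99^3$.

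The main obstacle is the numerical tightness of the explicit constant $0.99$: the naive length-three block bound $(26/27)^{\lfloor m/3 \rfloor}$ fails to beat $0.99^m$ for $m \in \{4, 5, 8\}$, so it is essential to pass to length-four blocks (with the strictly smaller constant $25/27$). Everything else is routine bookkeeping: the monotonicity of $F_4^{(0)}/q^4$ in $q$, and a finite-case check that $\lfloor m/4 \rfloor / m$ exceeds the threshold for every $m \geq 4$.
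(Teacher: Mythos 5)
Your proof is correct, and while the first inequality is handled essentially as in the paper, your argument for $F_m^{(0)}<(0.99q)^m$ takes a genuinely different route. For $F_m^{(1)}<F_m^{(0)}$ the paper also passes through the set $T_m^{(1)}$ of strings with no strictly increasing consecutive triple, noting $S_m^{(1)}\subsetneq T_m^{(1)}$ and $|T_m^{(1)}|=F_m^{(0)}$; you make the same move via the explicit reflection $\alpha_i\mapsto (q-1)-\alpha_i$ and supply the all-zero witness for strictness, which the paper omits. For the second inequality the paper sets up a three-term recursion $F_m^{(0)}\le 2F_{m-1}^{(0)}+\tfrac{(q+1)(q-2)}{2}F_{m-2}^{(0)}+\tfrac{1}{3}q(q-1)(q-2)F_{m-3}^{(0)}$ (obtained by casing on the first few entries of a string in $S_m^{(0)}$), verifies the base cases $m=3,4,5$ by inclusion--exclusion, and closes with strong induction; you instead exploit the submultiplicativity $F_{m_1+m_2}^{(0)}\le F_{m_1}^{(0)}F_{m_2}^{(0)}$ (prefix/suffix restriction is injective and preserves the forbidden-pattern condition), reduce everything to the single computation $F_4^{(0)}/q^4\le 25/27$ at $q=3$ via the monotonicity check $18q^2-10q-18>0$, and finish with the elementary estimate $\lfloor m/4\rfloor/m\ge 1/7>\log(100/99)/\log(27/25)$. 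Your computation of $F_4^{(0)}=q^4-2q\binom{q}{3}+\binom{q}{4}$ agrees with the paper's, and your observation that length-$3$ blocks would fail for $m\in\{4,5,8\}$ correctly identifies why the block length matters. Your approach buys a cleaner and more easily verified argument: it avoids the paper's somewhat delicate induction step (whose displayed chain of inequalities contains a typographical slip, $2(0.99q)^m-1$ where $2(0.99q)^{m-1}$ is meant) and makes the dependence on the constant $0.99$ completely transparent; the paper's recursion, on the other hand, is tighter in principle since it tracks the structure of $S_m^{(0)}$ exactly rather than through a crude product bound.
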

 \begin{proof}
 Note that the first inequality is clear since $S_{m}^{(1)}\subsetneq T_m^{(1)}\triangleq\{\balpha\in \Sigma_q^m:\nexists i\in\{1,\cdots, m-2\}:\alpha_i<\alpha_{i+1}<\alpha_{i+2}\}$ and $|T_m^{(1)}|=F_m^{(0)}.$

 Now, we note that $F_3^{(0)}=q^3-\binom{q}{3}$ and it can be verified that $F_3^{(0)}<(0.99q)^3$ for any $q>2.$ Next we consider $F_4^{(0)}.$ Note that for a $q$-ary string of length $4$ not to be considered in $S_4^{(0)},$ it must have a strictly decreasing substring of length $3$ either in its first $3$ or last $3$ elements. Hence, there are $2q\binom{q}{3}-\binom{q}{4}$ of such strings and $F_4^{(0)}=q^4-2q\binom{q}{3}+\binom{q}{4}$ which can again be verified to be bounded from above by $(0.99q)^4$ for any $q>2.$ Lastly, we consider $F_5^{(0)}.$ It is easy to see that if $q$-ary string of length $5$ is not in $F_5^{(0)},$ then it must have a strictly decreasing substring of length $3,$ which can start at either the first, second or third entry. Noting that the case that a string has two strictly decreasing substrings of length $3$ starting from both the first and third entry must also have a strictly decreasing substring of length $3$ from the second entry, by the Inclusion and Exclusion principle, we have
 \[F_5^{(0)}=q^5-3q^2\binom{q}{3}+2q\binom{q}{4}=\frac{7}{12}q^5+q^4-\frac{13}{12}q^3-\frac{q^2}{2}\le (0.99 q)^5.\]

 Now we claim that for any $m\ge 6,$ if $F_i^{(0)}<(0.99q)^i$ for $i=m-3,m-2,m-1,$ then we must also have $F_m^{(0)}<(0.99q)^m.$

 Note that for $\mathbf{v}=(v_1,\cdots, v_m)\in S_m^{(0)},$ it must satisfy one of the following conditions:
 \begin{itemize}
 \item $v_1\in\{0,1\}$ and $(v_2,\cdots, v_m)\in S_{m-1}^{(0)},$
 \item $v_1\ge 2, v_2\in\{0,v_1,v_1+1,\cdots, q-1\}$ and $(v_3,\cdots, v_m)\in S_{m-2}^{(0)}$ or
 \item $v_1\ge 2, v_2\in\{1,\cdots, v_1-1\}, v_3\ge v_2$ and $(v_4,\cdots, v_m)\in S_{m-3}^{(0)}.$
 \end{itemize}
 Hence,
 \begin{eqnarray*}
F_m^{(0)}&\le& 2F_{m-1}^{(0)}+\sum_{j=2}^{q-1}(q-j+1)F_{m-2}^{(0)}+\sum_{j=2}^{q-1}\sum_{i=1}^{j-1}(q-i)F_{m-3}^{(0)}\\
&=&2F_{m-1}^{(0)}+\frac{(q+1)(q-2)}{2}F_{m-2}^{(0)}+\frac{1}{3}q(q-1)(q-2)F_{m-3}^{(0)}\\
&<&2(0.99 q)^m-1+\frac{(q+1)(q-2)}{2} (0.99 q)^{m-2}+\frac{1}{3}q(q-1)(q-2)(0.99 q)^{m-3}\\
&=&0.99^{m-3}q^{m-2}\left(q^2\left(\frac{0.99}{2}+\frac{1}{3}\right)+q\left(2(0.99)^2-\frac{0.99}{2}-1\right)+\left(\frac{2}{3}-0.99\right)\right)\\
\end{eqnarray*}
which can be easily verified to be at most $(0.99)^m q^m,$ as claimed.
\end{proof}

Let $U_m$ be the set of all $q$-ary strings $\mathbf{v}=(v_1,\cdots, v_m)$ of length $m$ such that it contains $1\le i,j\le m-2$ where $v_i>v_{i+1}>v_{i+2}$ and $v_j\le v_{j+1}\le v_{j+2}$ and $G_m$ be its size. Hence we have $G_m\ge q^m - F_m^{(0)}-F_{m}^{(1)}> q^m (1-2(0.99)^m).$ We further define $R_n,$ the set of all $d$-regular $q$-ary strings of length $n.$ Then we have the following result.

\begin{lemma}\label{lem:QnlargeEff}
For a sufficiently large $n,$ there exists a positive integer $M>q^{n-1}$ and a one-to-one map $\Pi_q: \Sigma_q^{n-1}\rightarrow R_n$ which can be efficiently computed.
\end{lemma}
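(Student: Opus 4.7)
The plan is to combine a counting lower bound $|R_n| > q^{n-1}$ with an automaton-based ranking scheme to produce an efficient injection.

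First, I would show $|R_n| > q^{n-1}$ by a union bound. Set $w = \lceil d \log n \rceil$ for a large constant $d$ to be chosen. A length-$n$ string $\balpha$ fails to be $d$-regular precisely when some length-$w$ contiguous substring of $\balpha$ lies in $S_w^{(0)} \cup S_w^{(1)}$ (up to the boundary exception for substrings containing the first entry). Applying Lemma~\ref{lem:Fm} and a union bound over the at most $n$ possible starting positions,
\[ q^n - |R_n| \;\le\; n \cdot (F_w^{(0)} + F_w^{(1)}) \cdot q^{n-w} \;\le\; 2 n \cdot (0.99q)^w \cdot q^{n-w} \;=\; 2 n^{1 + d \log 0.99}\, q^n. \]
Since $\log 0.99 < 0$, fixing any constant $d$ with $1 + d \log 0.99 < -1$ (e.g., $d = 200$) makes the bound $O(q^n / n^2) = o(q^{n-1})$; hence $|R_n| > q^{n-1}$ for all sufficiently large $n$, and we may take $M = |R_n|$.

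Second, I would realize the efficient injection $\Pi_q$ by ranking $R_n$ through a deterministic finite automaton. The state encodes just enough information to monitor regularity on the fly: the capped distance $a \in \{0, 1, \dots, w\}$ since the last strict-decreasing triple ended, the capped distance $b$ since the last non-decreasing triple ended, and the last two symbols read. This state space has size $O(w^2 q^2) = \mathrm{poly}(n, q)$; the transition on reading a new symbol is a direct check of whether the newly formed length-$3$ suffix is strict-decreasing, non-decreasing, or neither, updating $a$ and $b$ accordingly, and a run is consistent with $d$-regularity precisely when $a < w$ and $b < w$ hold throughout. Given this automaton $\mathcal D$, a standard dynamic program computes in $\mathrm{poly}(|\mathcal D|, n)$ time the number of length-$\ell$ accepted continuations from each state, after which lexicographic ranking/unranking of accepted strings of length $n$ proceeds by a routine symbol-by-symbol digit selection. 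Defining $\Pi_q(\mathbf v)$ to be the unranking of the integer encoded by $\mathbf v \in \Sigma_q^{n-1} \hookrightarrow \{0, 1, \dots, M - 1\}$ yields the desired injection, computable in time polynomial in $n$ and $q$.

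The main obstacle is keeping the automaton state space genuinely polynomial: a naive ``remember the last $w$ symbols'' state has size $q^w = n^{d \log q}$, whose exponent depends on $q$, so the condensed state above is essential. Care is also needed to respect the boundary exception in Definition~\ref{def:RegStr} that allows substrings containing the first entry of $\balpha$ to avoid a non-decreasing triple as long as $x_1 \le x_2$; this is handled by adding a one-bit ``still-in-initial-region'' flag to the state and treating the first transition specially. An alternative route, closer in spirit to \cite{GH21}, is a direct bad-window substitution scheme that replaces each irregular window by a regular placeholder and logs the original content into a tail buffer fitted into the single extra $q$-ary symbol of slack between length $n-1$ and $n$; the counting bound above provides sufficient information-theoretic slack, but the pointer bookkeeping required to make the replacement invertible constitutes the most delicate part of that alternative approach.
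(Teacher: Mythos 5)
Your proposal is correct in substance but takes a genuinely different route from the paper. The paper does not bound $|R_n|$ by a union bound over bad windows; instead it builds an explicit product subset of $R_n$: setting $m=\lfloor\frac{d}{2}\log n\rfloor-1$ and $\Delta=\lfloor n/m\rfloor$, it takes all concatenations of $\Delta$ blocks drawn from $U_m$ (length-$m$ strings containing both a strictly decreasing and a non-decreasing triple) followed by arbitrary padding; every window of length $d\log n$ then contains a whole block, so such strings are $d$-regular, and the count $M=G_m^{\Delta}q^{n-m\Delta}>q^n\left(1-2(0.99)^m\right)^{\Delta}>q^{n-1}$ follows from Lemma~\ref{lem:Fm} together with a Bernoulli-type expansion showing $2(e-1)\Delta(0.99)^m=O(1/\log n)$. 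The injection is then inherited from the product structure, block by block, as in \cite{GH21}, rather than from a global ranking of $R_n$. Your union-bound count is an equally valid (and slightly stronger) way to get $|R_n|>q^{n-1}$, and your DFA/DP unranking is a legitimate, more self-contained replacement for the injection; it even targets all of $R_n$ rather than a proper subset. The trade-offs: the paper's product construction makes the efficiency of $\Pi_q$ essentially structural, whereas your automaton requires the careful bookkeeping you already flag (capped gap counters, the initial-segment exception, and the off-by-a-constant translation between ``every window of length $w$ contains a triple'' and ``consecutive triple occurrences are at most $w-O(1)$ apart''); conversely, your condensed $O(w^2q^2)$ state space addresses a point the paper glosses over, since naive per-block ranking of $U_m$ costs $q^m=n^{\Theta(\log q)}$ when $q$ grows with $n$. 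One small correction: your assertion that the bad-string count is $o(q^{n-1})$ requires $q=o(n^2)$ as written; what you actually need, and do have, is that it is below $q^n(1-1/q)\ge\frac{2}{3}q^n$ for large $n$, so the conclusion $|R_n|>q^{n-1}$ stands.
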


\begin{proof}
Set $m=\left\lfloor\frac{d}{2}\log n\right\rfloor-1$ and $\Delta=\left\lfloor\frac{n}{m}\right\rfloor.$ It is then easy to see that concatenating $\Delta~q$-ary strings taken from $U_m$ and padding it with an arbitrary $n-m\Delta q$-ary string produces an element of $R_n,$ which is a $d$-regular $q$-ary string of length $n.$ Hence the number of $d$-regular $q$-ary strings of length $n$ that can be constructed in this manner, which we denote by $M$ is

\[
M= G_m^\Delta q^{n-m\Delta}
> \left(q^m(1-2(0.99)^m)\right)^\Delta q^{n-m\Delta}
= q^{n}\left(1-2(0.99)^m\right)^\Delta.
\]
 Note that $2(0.99)^m=O\left(\frac{1}{2}^{\log n}\right)=O\left(\frac{1}{n}\right).$ On the other hand, we have $\Delta=O\left(\frac{n}{\log n}\right).$ This implies $2\Delta(0.99)^m = O\left(\frac{1}{\log n}\right).$ Hence for sufficiently large $n, 2\Delta(0.99)^m<1$ and we have

\begin{eqnarray*}
 (1-2(0.99)^m)^\Delta&=&1+ \sum_{j=1}^\Delta \binom{\Delta}{j}(-2(0.99)^m)^j
 > 1-\sum_{j=1}^\Delta\binom{\Delta}{j}(2(0.99)^m)^j
 > 1-\sum_{j=1}^\Delta \frac{(2\Delta(0.99)^m)^j}{j!}\\
 &>& 1- \sum_{j=1}^\Delta \frac{2\Delta(0.99)^m}{j!}
 = 1- 2\Delta(0.99)^m\sum_{j=1}^\Delta \frac{1}{j!}>1-2\Delta(0.99)^m (e-1).
 \end{eqnarray*}

We again note that $2(e-1)\Delta(0.99)^m= O\left(\frac{1}{\log n}\right).$ Hence for a sufficiently large $n,$ we have $1-2(e-1)\Delta(0.99)^m \ge \frac{1}{q}.$ This shows that for a sufficiently large $n, M > q^{n-1}.$ Hence $\Pi_q$ can be designed and efficiently computed in the same way as the one proposed in \cite[Lemma $5.12$]{GH21} with computation complexity polynomial in $n.$
\end{proof}

We conclude by summarizing our result in the following theorem.

\begin{theorem}\label{thm:ConsFromGH21}
Let $\mathcal{C}_0\subseteq \Sigma_2^n$ be the binary two-deletion correcting code constructed in \cite[Theorem 5.9]{GH21}. Then for any positive integer $q>2,$ there exists an explicit and efficiently encodable $2$ deletion correcting code $C_1\subseteq \Sigma_q^n$ with redundancy $5\log n+10\log\log n + 3\log q+O(1).$

Furthermore, utilizing our list-decodable code construction described in Corollary \ref{cor:GenConsLD} with $\mathcal{C}_0$ as the underlying binary $2$-deletion correcting code, there exists an explicit and efficiently encodable $q$-ary deletion code $C_2\subseteq \Sigma_q^n$ with redundancy $4\log n+10\log\log n+2\log q+O(1)$ that can be  list decoded against $2$ deletions with list size of at most $2.$

In both cases, the encoding complexity is $\mathrm{poly}(n).$ Furthermore, if the complexity of computing $\Pi_q^{-1}$ is $T_{\Pi_q,D}(n),$ then the decoding complexity is $2T_{\Pi_q,D}(n)+O(\log^2 q + n).$
\end{theorem}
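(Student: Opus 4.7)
The plan is to combine the GH21 binary $d$-regular sketch from \cite[Theorem~5.9]{GH21} (which has output length $4\log n+10\log\log n+O(1)$ bits) with Lemmas~\ref{lem:GSCons}, \ref{lem:QnlargeEff}, and~\ref{lem:Sketch-Code}. First, instantiating $\bar s$ in Lemma~\ref{lem:GSCons} as the GH21 sketch produces a $q$-ary sketch whose total output length is exactly
\[
\underbrace{\log q}_{s^{(1)}\bmod q}+\underbrace{\log q}_{s^{(2)}\bmod q}+\underbrace{\log n+\log q}_{s^{(3)}\bmod(nq)}+\underbrace{4\log n+10\log\log n+O(1)}_{\tilde s}=5\log n+10\log\log n+3\log q+O(1)
\]
bits; Lemma~\ref{lem:GSCons} guarantees unique decoding of any $d$-regular $\balpha\in\Sigma_q^n$ from any length-$(n-2)$ subsequence together with this sketch.

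The key obstacle is that Lemma~\ref{lem:Sketch-Code} requires a sketch that decodes every $\balpha\in\Sigma_q^n$, whereas ours only decodes the $d$-regular ones. To bridge the gap I would use the map $\Pi_q:\Sigma_q^{n-1}\to R_n$ from Lemma~\ref{lem:QnlargeEff}: the message is first pushed through $\Pi_q$ to obtain a $d$-regular $\balpha$ of length $n$, the four sketches are computed on $\balpha$, and they are appended as a sketch block protected by three repetitions of a short standard binary $2$-deletion sketch that does not itself require regularity (for instance the one in \cite{SRB18}). The single extra $q$-ary symbol of slack absorbed by $\Pi_q$ is folded into the $3\log q+O(1)$ term, so the overall redundancy of the resulting code $C_1\subseteq\Sigma_q^n$ is $5\log n+10\log\log n+3\log q+O(1)$, as claimed.

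For the list-decodable code $C_2$ I repeat the above construction but drop the sketch $s^{(3)}(\balpha)\bmod(nq)$, which saves exactly $\log n+\log q$ bits. By the argument for Lemma~\ref{lem:gencons} (see also Corollary~\ref{cor:GenConsLD}), the only role of $s^{(3)}$ was to disambiguate between the two orderings of the reinserted pair $\{\theta_1,\theta_2\}$; without it at most two candidates for $\balpha$ remain, giving a list-decoder of list size~$2$ with redundancy $4\log n+10\log\log n+2\log q+O(1)$.

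Finally, the complexity claims follow since $\Pi_q$ is $\mathrm{poly}(n)$ by Lemma~\ref{lem:QnlargeEff} and the four sketches are computable in $O(n)$, so encoding is $\mathrm{poly}(n)$; for decoding, Theorem~\ref{thm:gencons} recovers $\balpha$ in $O(\log^2 q+n)$ time from the sketch values, after which two invocations of $\Pi_q^{-1}$ (one on the data block and a second inside the sketch-protection recovery) yield the bound $2T_{\Pi_q,D}(n)+O(\log^2 q+n)$. The hard part is precisely the regularity bridge --- verifying that $\Pi_q$ produces a large enough $d$-regular image (Lemma~\ref{lem:QnlargeEff}) and that the sketch-protection mechanism of Lemma~\ref{lem:Sketch-Code} remains valid when it is layered over a sketch that only works on regular strings --- and everything else is bookkeeping.
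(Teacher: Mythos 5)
Your proposal follows essentially the same route as the paper, which itself offers no separate proof of this theorem but simply assembles Lemma~\ref{lem:GSCons} (the regular-string sketch built on the GH21 binary sketch of size $4\log n+10\log\log n+O(1)$), the regularity encoder $\Pi_q$ of Lemma~\ref{lem:QnlargeEff}, and the sketch-to-code conversion of Lemma~\ref{lem:Sketch-Code}; your redundancy arithmetic, the dropping of $s^{(3)}$ for the list-decodable variant, and the complexity accounting all match the paper's intended argument. The only loose point --- your remark that the extra $q$-ary symbol consumed by $\Pi_q$ is ``folded into the $3\log q+O(1)$ term'' --- is not literally an $O(1)$ absorption when $q$ is allowed to grow with $n$, but this is an imprecision the paper's own statement shares rather than a defect introduced by your argument.
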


\begin{rem}\label{rem:ConsFromGH21}{\rm
Note that the $q$-ary $2$-deletion correcting code $C_1$ defined in Theorem \ref{thm:ConsFromGH21} outperforms any of the existing construction of $2$-deletion codes for any $q.$ This includes the code constructed in \cite{SC22}.

Similarly, the $q$-ary $2$-deletion code $C_2$ defined in Theorem \ref{thm:ConsFromGH21} with list size at most $2$ outperforms our best construction from  Corollary \ref{cor:GenConsLDRed}.
}\end{rem}

\end{document}